\definecolor{shadecolor}{RGB}{242,242,255}
\title{\textbf{Estimation Method under Three-Parameter Generalized Exponential Model: Consistency, Uniqueness and its Applications}}
\author[1]{Kiran Prajapat\thanks{Corresponding author: Kiran Prajapat. Email:kiranprajapat92@gmail.com, kiran.prajapat@newcastle.ac.uk}}
\author[2]{Sharmishtha Mitra}
\author[2]{Debasis Kundu}
\affil[1]{School of Mathematics, Statistics and Physics, Newcastle University, Newcastle, GB}
\affil[2]{Department of Mathematics \& Statistics, Indian Institute of Technology Kanpur, Kanpur, Uttar Pradesh, India - 208016}
\date{}
\begin{document}
	\maketitle 
	\begin{abstract}
		In numerous instances, the generalized exponential distribution can be used as an alternative to the most widely used non-regular family of distributions: Weibull, gamma, lognormal with three-parameters when analyzing lifetime or any skewed continuous data. A non-regular family is a class of probability distributions that do not satisfy the regularity conditions typically assumed in classical statistical inference. Some key features of such family of distributions are: support of its probability density function depends on one its parameters; its likelihood function may not be bounded for a certain range of parameter space, hence maximum likelihood estimators do not exist; the likelihood function even may not be differentiable or integrable as needed, hence Fisher Information 	may not exist or be infinite. Moreover, standard results like MLE existence, consistency, asymptotic normality may fail. Therefore, specialized or robust inferential techniques are needed. This article offers a consistent method for estimating the parameters of a three-parameter generalized exponential distribution that sidesteps the issue of an unbounded likelihood function. The method is hinged on a maximum likelihood estimation of shape and scale parameters that uses a location-invariant statistic. Important estimator properties, such as uniqueness and consistency, are demonstrated for the first time under this approach. In addition, quantile estimates for the assumed distribution are provided. We present a Monte Carlo simulation study along with comparisons to a number of well-known estimation techniques in terms of bias and root mean square error. For illustrative purposes, a real dataset from reliability engineering, has been analyzed and the goodness of fit along with the bootstrap confidence intervals are compared with existing traditional methods. 
	\end{abstract}
	\providecommand{\keywords}[1]{\textbf{\textit{Key words:}} #1}
	\keywords{non-regular family; location-invariant statistics; modified maximum likelihood estimation; consistency; unimodal; confidence interval}
	\doublespacing
	\newtheorem{theorem}{Theorem}[section]
	\newtheorem{corollary}{Corollary}[theorem]
	\newtheorem{proposition}{Proposition}[section]
	\newtheorem{lemma}{Lemma}[section]
	\newtheorem{remark}{Remark}[section]
	\newpage
	\section{Introduction}
	\label{sec1}
	Statistical modeling of any skewed data, such as COVID-19 mortality data, financial weekly return data and electrical lifetime data using a probability model that fits the best is important. The most widely used and well-liked distributions for analyzing such skewed data or lifetime data are the gamma and Weibull distributions with three parameters. These three parameters, which stand for location, scale, and shape, give the distributions a lot of flexibility when it comes to analyzing skewed data. Unfortunately, both distributions have some flaws as well. Additionally, as a special case, the exponentiated Weibull distribution reduces to the three-parameter generalized exponential (GE) distribution when its location parameter is not present. The exponentiated Weibull distribution was first proposed by \cite{mudholkar1995exponentiated}. In many instances, it has been demonstrated that the GE model is applicable as a replacement for the gamma model or the Weibull model; for more information, see \cite{guptakundu1999,gupta2001exponentiated,gupta2007generalized}.
	
	Consider a three-parameter GE distribution denoted by $\text{GE}(\alpha,\beta,\gamma)$, where $\alpha$, $\beta$ and $\gamma$ represent, respectively, the scale, shape, and location parameters. For $\alpha>0,~\beta>0,~\gamma\in {\mathbb{R}}$, a random variable that follows the $\text{GE}(\alpha,\beta,\gamma)$ has the cumulative distribution function (CDF) $F(.;\alpha, \beta, \gamma)$,
	\begin{align}
		\label{eq1}
		F(x;\alpha, \beta, \gamma) = \begin{cases}
			0, & \text{if~} x<\gamma \\
			\big(1-e^{-\frac{x-\gamma}{\alpha}}\big)^\beta, & \text{if~} x>\gamma.
		\end{cases}
	\end{align} 
	Here, all three parameters are undetermined. Since the three-parameter GE distribution belongs to a location-scale family, if $X \sim \text{GE}(\alpha,\beta,\gamma)$, then the standardized random variable $Y=(X-\gamma)/\alpha$ follows a generalized exponential distribution with parameters $(1,\beta,0)$, referred to as the standard GE distribution. For the other similar three-parameter distributions, such as the lognormal, gamma, Weibull, and inverse Gaussian, etc., in which the location parameter is unknown, it is well known that the regularity conditions are not satisfied for the estimation method of the widely recognized maximum likelihood (ML) because the support of the probability density function (PDF) depends on the unknown location parameter, and thus the ML estimation may encounter difficulties. In the majority of time, the maximum likelihood estimator (MLE) does not exist for a particular range of the parameter space. In such situations, the likelihood becomes unbounded. Some additional issues with MLEs for non-regular distributions (when they exist): Asymptotic normality of the MLE and its functions cannot be used; the Fisher information matrix cannot be used for asymptotic variances and covariances due to the failure of the regularity conditions. Several authors, including \cite{cohen1980estimation,cohen1982modified,cohen1984modified,smith1985maximum,smith1987comparison,hirose1997inference,hall2005bayesian,nagatsuka2012parameter,nagatsuka2013consistenta,nagatsuka2014consistent,nagatsuka2015efficient,prajapat2021consistent,basu2023three} have explored this issue. The identical issue arises in the case of three-parameter GE distribution. \cite{guptakundu1999} elaborated on the ML estimation for the three-parameter GE distribution. When the shape parameter  $\beta < 1$, it has been demonstrated that the MLE does not exist because the likelihood function becomes unbounded when the location parameter $\gamma$ is approximately equal to the smallest observation in the observed sample; whereas asymptotic results have been presented only for the range of $\beta > 2$. Taking this problem into account, \cite{prajapat2021consistent} and \cite{basu2023three} offered other procedures for estimating the parameters of three-parameter GE for its entire parameter space.
	
	Most recent and newly proposed estimation methods in this direction include the location-scale-parameter-free (LSPF) and location-parameter-free (LPF) methods. 
	As the names of the methods imply, the LSPF method is based on a location and scale invariant statistic, whereas the LPF method uses a location invariant statistic; consequently, the likelihood functions based on the invariant statistics for these methods are one-dimensional and two-dimensional, respectively. 
	Uniqueness and the consistency properties of LPF estimators have not yet been demonstrated analytically for any of the well-known three-parameter distributions examined in the literature. We assume this since the resultant likelihood function has a complicated form. Consequently, an attempt has been made for the hypothesized GE distribution, and consistency is proved whereas unimodality has been established for some cases. 
	 
	This study shows that the LPF approach has an advantage over the LSPF method in that it requires less time to execute the simulations and has less computational complexity.  The main explanation of this is the reduction in the number of integration as compare to the LSPF method. One of the LSPF method's drawbacks over the LPF method is that it uses the estimates from the preceding sequence to construct estimates separately and sequentially. This might lead to a significant build-up of biases in the sequences. The LPF, on the other hand, produces estimates through two-dimensional optimization by reducing number of steps in the estimation procedure. Moreover, LPF method works with a reduced number of degenerate random variables as compared to the LSPF method, which possibly may encourage it to better perform in some situations.
	
	Therefore, the major purpose of the study is to develop the LPF method of estimation in detail for the parameters and quantiles of the three-parameter GE distribution. The second key purpose of the paper is to demonstrate the estimators' uniqueness and consistency. Because proving the uniqueness analytically is challenging, therefore proofs are provided for some particular cases. To obtain bias and root mean square error (RMSE) of the estimators, a Monte Carlo simulation is implemented. On the basis of the biases and RMSE of the estimators, we also conduct a comparative study with some prominent methods.
	
	The remaining sections are structured as follows. The estimation procedure based on the proposed LPF method for the three-parameter GE distribution is detailed in Section \ref{sec2}. In this section, we also discuss the estimators' properties, such as their uniqueness and consistency. In Section \ref{section3}, a Monte Carlo simulation study is conducted for the purpose of evaluating the LPF method and making comparisons to some existing methods. In Section \ref{section4}, the LPF method along with other methods are illustrated using a real-world dataset of electrical lifetimes. Section \ref{section5} presents concluding remarks while summarizing the study.
	
	\section{Proposed Estimators and their Properties} 
	\label{sec2}
	Assume that $X_1, X_2, \dots, X_n$ are $n$ independent and identically distributed (\textit{i.i.d.}) random variables following the three-parameter GE distribution with the common CDF defined in equation $\eqref{eq1}$. Throughout the paper, it is presumed that $n \geq 3$. Consider the order statistics of $X_1, X_2, \dots, X_n$ to be $X_{(1)} < X_{(2)} < \dots < X_{(n)}$. 
	\noindent To begin the estimation procedure, we will consider the following statistic:
	\begin{align}
		\label{eq3}
		V_{(i)} = X_{(i)} - X_{(1)}, ~ i=1,2,\dots,n.
	\end{align} 
	$V_{(i)}$'s probability distribution is independent of the location parameter. It is worth noting that $V_{(1)} = 0$. The scale and shape parameters are therefore estimated based on the transformed data $V_{(1)}, V_{(2)}, \dots, V_{(n)}$, whose joint probability distribution primarily depends on scale and shape parameters.
	
	In this estimating sequence according to the developed LPF approach,  estimators  of the scale and shape parameter are then being used to estimate the location parameter. The estimation of parameters is discussed in detail in Subsection \ref{subsec2.1}, while the estimation of quantiles for the lifetime distribution is elaborated on in Subsection \ref{subsec2.3}. 
	\subsection{Parameter Estimation }
	\label{subsec2.1}
	We estimate the scale and shape parameters $\alpha$ and $\beta$ based on the random variables $V_{(1)}, V_{(2)}, \dots, V_{(n)}$. As the likelihood function of $\alpha$ and $\beta$ based on the transformed data is not dependent on the location parameter, it is bounded. Let $v_1, v_2, \dots, v_n$ represent the respective realizations of $V_{(1)}, V_{(2)}, \dots, V_{(n)}$. Note that $v_1$ must be $0$.
	\begin{theorem}
		\label{thm_2.1}
		The likelihood function of $\alpha$ and $\beta$, given $ v_2, v_3, \dots, v_{n} $, is given by
		\begin{align}
			\label{eq4}
			\ell_v(\alpha, \beta|v_2, \dots, v_{n}) = n! \Big(\frac{\beta}{\alpha}\Big)^n \int_{0}^{\infty}  e^{ -\frac{1}{\alpha}\sum_{i=1}^{n} (u+v_i) } \prod_{i = 1}^{n} \big( 1 - e^{-\frac{u+v_i}{\alpha}} \big)^{\beta-1}  du, ~ \alpha > 0,~ \beta > 0, 
		\end{align}
		with $0 < v_{2} < \dots < v_{n} < \infty,~ v_{1} = 0$.
		\begin{proof}
			See Appendix \ref{app_1}.
		\end{proof}	
	\end{theorem}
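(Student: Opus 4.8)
The plan is to derive \eqref{eq4} as the marginal density of $(V_{(2)}, \dots, V_{(n)})$ obtained from the joint density of the full order-statistics vector by integrating out the smallest observation. First I would differentiate the CDF in \eqref{eq1} to obtain the GE density $f(x; \alpha, \beta, \gamma) = (\beta/\alpha)\, e^{-(x-\gamma)/\alpha}\,(1 - e^{-(x-\gamma)/\alpha})^{\beta-1}$ for $x > \gamma$, and then write the joint density of $X_{(1)} < \dots < X_{(n)}$ as $n!\prod_{i=1}^{n} f(x_{(i)}; \alpha, \beta, \gamma)$ on $\{\gamma < x_{(1)} < \dots < x_{(n)}\}$, which is the standard form for the order statistics of an \textit{i.i.d.} sample.

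Next I would apply the change of variables $(X_{(1)}, \dots, X_{(n)}) \mapsto (W, V_{(2)}, \dots, V_{(n)})$ with $W = X_{(1)}$ and $V_{(i)} = X_{(i)} - X_{(1)}$ for $i \geq 2$. Its inverse is $X_{(1)} = W$, $X_{(i)} = W + V_{(i)}$, and the associated Jacobian matrix is lower triangular with unit diagonal, so the Jacobian determinant equals $1$. Substituting $x_{(i)} = w + v_i$ with the convention $v_1 = 0$ gives the joint density of $(W, V_{(2)}, \dots, V_{(n)})$ as $n!(\beta/\alpha)^{n} \prod_{i=1}^{n} e^{-(w + v_i - \gamma)/\alpha}(1 - e^{-(w + v_i - \gamma)/\alpha})^{\beta-1}$, supported on $w > \gamma$ and $0 < v_2 < \dots < v_n$.

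Finally I would integrate $W$ out over $(\gamma, \infty)$ and substitute $u = w - \gamma$, so that the limits become $0$ and $\infty$ and every exponent $w + v_i - \gamma$ collapses to $u + v_i$. The parameter $\gamma$ then disappears completely --- this is precisely the location invariance of the statistic --- and grouping the exponential factors as $e^{-\frac{1}{\alpha}\sum_{i=1}^{n}(u + v_i)}$ reproduces \eqref{eq4} verbatim, with the stated ranges $0 < v_2 < \dots < v_n < \infty$ and $v_1 = 0$.

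I do not anticipate a genuine obstacle here; the derivation is essentially a single substitution once the order-statistics density and the unit-Jacobian shift are in place. The only points meriting care are the bookkeeping of the support inequalities together with the degenerate convention $v_1 = 0$, and a brief justification that the $u$-integral converges: near infinity the integrand is dominated by $e^{-nu/\alpha}$ and near $0$ it is bounded, so $\ell_v$ is finite for every $\alpha, \beta > 0$, which confirms the boundedness claim made just before the theorem.
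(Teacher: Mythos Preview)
Your derivation is correct and is essentially the paper's own argument: the paper first shifts to $Z_{(i)}=X_{(i)}-\gamma$ and then changes variables to $(U,V_{(2)},\dots,V_{(n)})$ with $U=Z_{(1)}$, whereas you carry $\gamma$ through and substitute $u=w-\gamma$ at the end, which is the same computation in a different order. One small slip in your closing aside on finiteness: for $\beta<1$ the integrand is \emph{not} bounded near $u=0$, since the factor $(1-e^{-u/\alpha})^{\beta-1}$ arising from $v_1=0$ blows up there; it is still integrable because it behaves like $(u/\alpha)^{\beta-1}$, but the justification needs that integrability argument rather than boundedness.
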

	In Theorem \ref{thm_2.1}, the likelihood function is a bounded and differentiable function with respect to the parameters $\alpha$ and $\beta$. These properties' proofs are listed in Appendix \ref{app_2}.  In order to maintain simplicity, we will refer to $\ell_v(\alpha, \beta|v_2, \dots, v_{n})$ as $ \ell_v(\alpha, \beta)$  from now on.
	
	Due to the complexity of the likelihood function, we were unable to prove the unimodality of the bivariate function $\ell_v(\alpha, \beta)$, but we could establish the unimodality of the likelihood function $\ell_v(\alpha, \beta)$ when one of the parameters $\alpha$ and $\beta$ is fixed. Now, we present two main findings in the subsequent theorems, the first of which relates to unimodality and the second to the consistency of the unique maximum. In Theorem \ref{thm_2.2}, the unimodality of the likelihood function is emphasized.
	\begin{theorem}
		\label{thm_2.2}
		For every $0 < v_{2} < \dots < v_{n} <  \infty, v_{1} = 0$, the likelihood function $\ell_v(\alpha, \beta)$ is unimodal function of $\alpha>0~(\text{or} ~\beta>0)$ whenever $\beta~( \text{or} ~ \alpha)$ is fixed.
		\begin{proof}
			Let us recall the likelihood function in equation \eqref{eqq3} from Appendix \ref{app_2} and rewrite the function $h_v(\alpha, \beta;u)$ defined in there:
			\begin{align*}
				\ell_v(\alpha, \beta) = &  n!  \int_{0}^{\infty}  e^{ h_v(\alpha, \beta;u)}  \ du, ~ \alpha > 0,~ \beta > 0, & \nonumber \\
				\text{where} ~ h_v(\alpha, \beta;u) = &   n \ln(\beta) - n \ln(\alpha) - \frac{1}{\alpha} \sum_{i=1}^{n} c_i + (\beta - 1) \sum_{i=1}^{n} \ln(1 - e^{-c_i/\alpha}) & \nonumber
			\end{align*} with $c_i = u+v_i$. 
			Since the likelihood is differentiable with respect to $\alpha$ and $\beta$, therefore let us recall its derivatives, from equations \eqref{eqq5} and \eqref{eqq10}, which are as follows.
			\begin{align}
				\label{eq8}
				\frac{\partial \ell_v(\alpha, \beta)}{\partial \alpha}  =  & n! \int_0^\infty \frac{\partial h_v(\alpha, \beta;u)}{\partial \alpha}  e^{ h_v(\alpha, \beta;u)}  \ du, & 
			\end{align} with $\frac{\partial h_v(\alpha, \beta;u)}{\partial \alpha} = -\frac{n}{\alpha} + \frac{1}{\alpha^2} \sum\limits_{i=1}^{n} c_i \Big( \frac{1 - \beta e^{-c_i/\alpha}}{1 - e^{-c_i/\alpha}} \Big) = -\frac{n}{\alpha} + \frac{1}{\alpha^2} \sum\limits_{i=1}^{n} c_i \Big( 1 +  \frac{1 - \beta }{e^{c_i/\alpha} - 1} \Big), $ and
			\begin{align}
				\label{eq9}
				\frac{\partial \ell_v(\alpha, \beta)}{\partial \beta}  = & n! \int_0^\infty \frac{\partial h_v(\alpha, \beta;u)}{\partial \beta}  e^{ h_v(\alpha, \beta;u)}  \ du, &
			\end{align} with $\frac{\partial h_v(\alpha, \beta;u)}{\partial \beta} = \frac{n}{\beta} + \sum\limits_{i=1}^{n} \ln( 1 - e^{-c_i/\alpha} ). $ 
			
			First we show that $\ell_v(\alpha, \beta)$ is a unimodal function of $\beta$ for a fixed $\alpha$, and then its unimodality with respect to $\alpha$ when $\beta$ is fixed.  Therefore, let us first assume that $\alpha$ is fixed. Note that, for any choice of $u$ and $v_i$'s, $ e^{ h_v(\alpha, \beta;u)} > 0 $ and also the integrals are on positive support, hence the change in the sign of $ \frac{\partial \ell_v(\alpha, \beta)}{\partial \beta}  $ in \eqref{eq9} directly depends on the change in the sign of $ \frac{\partial h_v(\alpha, \beta;u)}{\partial \beta} $.
			$ \frac{\partial h_v(\alpha,\beta;u)}{\partial \beta}   \to \infty$ as $\beta \downarrow 0$ and $\frac{\partial h_v(\alpha,\beta;u)}{\partial \beta}  < 0 $ as $\beta \to \infty$. Moreover, $\frac{\partial^2 h_v(\alpha,\beta;u)}{\partial \beta^2}  = - \frac{n}{\beta^2} < 0 ~\forall ~ \beta $, \textit{i.e.}, $\frac{\partial h_v(\alpha,\beta;u)}{\partial \beta} $ changes sign from positive to negative and the change in sign is only once. Hence, for a fixed $\alpha$, $\frac{\partial \ell_v(\alpha,\beta)}{\partial \beta}  $ changes sign in the similar way and $\frac{\partial \ell_v(\alpha,\beta)}{\partial \beta} = 0$ has a unique solution which maximizes the likelihood function $\ell_v(\alpha,\beta)$ with respect to $\beta$. 
			Now, we move forward to show that $\ell_v(\alpha, \beta)$ is a unimodal function of $\alpha$ for a fixed $\beta$. With the same argument as given above, sign of $ \frac{\partial \ell_v(\alpha, \beta)}{\partial \alpha}  $ in \eqref{eq8} directly depends on the sign of $ \frac{\partial \ell_v(\alpha, \beta)}{\partial \alpha} $ or else, we can say that sign of $ \frac{\partial \ell_v(\alpha, \beta)}{\partial \alpha}$ is inversely proportional to the sign of $\alpha - \frac{1}{n} \sum\limits_{i = 1}^{n} c_i \Big(1 + \frac{   1 - \beta }{ e^{ c_i/\alpha } - 1} \Big)$. Say 
			\begin{align}
				\label{eq10}
				H_{1,n}(\alpha) = \frac{1}{n} \sum\limits_{i = 1}^{n} c_i \Big(1 + \frac{   1 -\beta}{ e^{ c_i/\alpha } - 1} \Big).
			\end{align} 
			When $\beta > 1 $, $H_{1,n}(\alpha)$ is a strictly decreasing function in $\alpha$ that decreases from $\frac{1}{n} \sum\limits_{i = 1}^{n}c_i$ to $-\infty$. $H_{1,n}(\alpha)$ is a constant function of $\alpha$ whenever $\beta = 1$ taking value  $\frac{1}{n} \sum\limits_{i = 1}^{n}c_i$. It is easy to see that $H_{1,n}(\alpha)$ and $\alpha$ meet exactly once whenever $\beta \geq 1$. Therefore, $\alpha - H_{1,n}(\alpha)$ changes sign from negative to positive and change in sign is only once whenever $\beta \geq 1$. Again, when $\beta < 1 $, $H_{1,n}(\alpha)$ is strictly increasing in $\alpha$ and it increases from $\frac{1}{n} \sum\limits_{i = 1}^{n} c_i$ to $\infty$. Now, let us consider the quantity
			\begin{align*}
				\frac{\partial H_{1,n}(\alpha)}{\partial \alpha} = \frac{ 1 - \beta}{n\alpha^2} \sum\limits_{i = 1}^{n} \Bigg(\frac{ c_i^2 e^{ -c_i/\alpha } }{ (1 - e^{ -c_i/\alpha })^2 } \Bigg).
			\end{align*} 
			Since $\frac{ s^2 e^{ -s } }{ (1 - e^{ -s })^2 } < 1 ~ \forall ~ s > 0 $ (see part $(\textit{i})$ of Lemma $2$ in \cite{ghitany2013}), we have $\frac{\partial H_{1,n}(\alpha)}{\partial \alpha} < 1$. 
			Hence,  $\alpha$ and $H_{1,n}$ have to meet exactly once. This implies that $\alpha - H_{1,n}(\alpha)$ changes its sign only once from negative to positive for $\beta < 1$ also. Since the sign of $ \frac{\partial \ell_v(\alpha, \beta)}{\partial \alpha}$ is inversely proportional to the sign of $\alpha -  H_{1,n}(\alpha)$, it is clear that change in sign of $ \frac{\partial \ell_v(\alpha, \beta)}{\partial \alpha}$ from positive to negative and it is only once. It implies that $\ell_v(\alpha, \beta)$ is unimodal with respect to $\alpha$ for a fixed $\beta$. 
		\end{proof}
	\end{theorem}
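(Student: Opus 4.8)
The plan is to reduce the two claims to a one-dimensional sign analysis of the relevant partial derivative and to exploit that the integrand $e^{h_v(\alpha,\beta;u)}$ is strictly positive on $(0,\infty)$. Fix $\alpha$ first. Since $\frac{\partial \ell_v}{\partial\beta}=n!\int_0^\infty \frac{\partial h_v}{\partial\beta}\,e^{h_v}\,du$ with $e^{h_v}>0$, I would begin by studying the pointwise factor $\frac{\partial h_v}{\partial\beta}=\frac{n}{\beta}+\sum_{i=1}^n\ln(1-e^{-c_i/\alpha})$. For each fixed $u$ this is strictly decreasing in $\beta$ because $\frac{\partial^2 h_v}{\partial\beta^2}=-\frac{n}{\beta^2}<0$; it tends to $+\infty$ as $\beta\downarrow 0$ and to the negative constant $\sum_i\ln(1-e^{-c_i/\alpha})$ as $\beta\to\infty$, so it crosses zero exactly once, from positive to negative. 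This establishes the desired sign pattern of the integrand at every $u$.

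For the scale parameter, with $\beta$ fixed, the same positivity reduces the sign of $\frac{\partial \ell_v}{\partial\alpha}$ to that of $\frac{\partial h_v}{\partial\alpha}=\frac{n}{\alpha^2}\bigl(H_{1,n}(\alpha)-\alpha\bigr)$, where $H_{1,n}(\alpha)=\frac1n\sum_{i=1}^n c_i\bigl(1+\frac{1-\beta}{e^{c_i/\alpha}-1}\bigr)$, so I would show that the curve $H_{1,n}(\alpha)$ meets the line $y=\alpha$ exactly once, split by the sign of $\beta-1$. For $\beta\ge 1$ the curve $H_{1,n}$ is non-increasing (strictly decreasing to $-\infty$ when $\beta>1$, constant when $\beta=1$), so it crosses the increasing line $y=\alpha$ once. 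For $\beta<1$ the curve is increasing to $+\infty$, so I would instead control its slope: differentiating gives $H_{1,n}'(\alpha)=\frac{1-\beta}{n\alpha^2}\sum_i\frac{c_i^2 e^{-c_i/\alpha}}{(1-e^{-c_i/\alpha})^2}$, and the elementary bound $\frac{s^2e^{-s}}{(1-e^{-s})^2}<1$ for $s>0$ (part (i) of Lemma 2 in \cite{ghitany2013}) yields $H_{1,n}'(\alpha)<1-\beta<1$. Since $H_{1,n}(\alpha)>\alpha$ near $\alpha=0$ and $H_{1,n}$ rises strictly more slowly than $y=\alpha$, the two meet exactly once. In both cases $\alpha-H_{1,n}(\alpha)$, hence $\frac{\partial h_v}{\partial\alpha}$, changes sign once, from positive to negative.

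The step I expect to be the genuine obstacle is transferring these pointwise single-crossings to the integral $\frac{\partial \ell_v}{\partial\beta}$ (resp.\ $\frac{\partial \ell_v}{\partial\alpha}$). The difficulty is that the crossing abscissa depends on $u$: for the shape parameter the pointwise root is $\beta^\ast(u)=-n/\sum_i\ln(1-e^{-c_i/\alpha})$, which ranges over all of $(0,\infty)$ as $u$ varies, so there is no value of $\beta$ at which the integrand has a constant sign in $u$, and a naive ``sign of the integral follows the sign of the integrand'' argument is insufficient. To close this gap I would not attempt global log-concavity of $\ell_v$ (which can fail, since $h_v$ is not jointly concave in $(\beta,u)$ when $\beta<1$), but instead prove that every critical point of $\ell_v$ is a strict local maximum, which forces uniqueness of the mode. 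Concretely, at any $\beta_0$ with $\frac{\partial \ell_v}{\partial\beta}(\beta_0)=0$, writing $w(u)\propto e^{h_v}$ for the induced probability weight on $(0,\infty)$, one computes
\begin{equation*}
\frac{\partial^2 \ell_v}{\partial\beta^2}(\beta_0)=\ell_v\Bigl(\mathrm{Var}_w(S)-\tfrac{n}{\beta_0^2}\Bigr),\qquad S(u)=\sum_{i=1}^n\ln\bigl(1-e^{-c_i/\alpha}\bigr),
\end{equation*}
so local-maximality is equivalent to $\mathrm{Var}_w(S)<n/\beta_0^2$; using $\mathbb{E}_w(S)=-n/\beta_0$ at the critical point this reads $n\,\mathrm{Var}_w(S)<(\mathbb{E}_w S)^2$. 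Establishing this moment inequality is the crux. A promising auxiliary structure is that $\frac{\partial h_v}{\partial\beta}$ is increasing in $u$ (its $u$-derivative is $\frac1\alpha\sum_i(e^{c_i/\alpha}-1)^{-1}>0$), so $e^{h_v}$ is totally positive of order two ($\mathrm{TP}_2$) in $(\beta,u)$ and the family $w_\beta$ has monotone likelihood ratio; this makes $\mathbb{E}_{w_\beta}(S)$ non-decreasing in $\beta$ and can be leveraged to bound the variance against the squared mean. The scale-parameter case is handled by the identical second-derivative test, where the requirement becomes $\mathrm{Var}_w\!\bigl(\frac{\partial h_v}{\partial\alpha}\bigr)<-\mathbb{E}_w\!\bigl(\frac{\partial^2 h_v}{\partial\alpha^2}\bigr)$, and the curvature term is controlled through the same monotonicity of $H_{1,n}$ and the bound from \cite{ghitany2013} used above.
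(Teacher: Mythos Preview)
Your first two paragraphs reproduce the paper's argument essentially verbatim: the paper also computes $\partial h_v/\partial\beta$ and $\partial h_v/\partial\alpha$, shows each has a single sign change in the relevant parameter for every fixed $u$, introduces the same function $H_{1,n}(\alpha)$, splits on $\beta\gtrless 1$, and invokes the same bound $s^2e^{-s}/(1-e^{-s})^2<1$ from \cite{ghitany2013}. At this point the paper simply asserts that ``the change in the sign of $\partial\ell_v/\partial\beta$ \dots\ directly depends on the change in the sign of $\partial h_v/\partial\beta$'' and likewise for $\alpha$, and stops. So the pointwise analysis you carry out is exactly the paper's proof; the paper does not supply any further argument to pass from the integrand to the integral.

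You are right that this passage is the real issue, and you have located it precisely: since the zero $\beta^\ast(u)$ of $\partial h_v/\partial\beta$ (and similarly the zero of $\alpha-H_{1,n}(\alpha;u)$) genuinely varies with $u$, a single sign change of the integrand in the parameter for each fixed $u$ does not by itself force a single sign change of the integral. The paper does not address this; your critique applies to the published proof as well. Your proposed repair---showing that every critical point is a strict local maximum via the identity $\partial^2\ell_v/\partial\beta^2=\ell_v\bigl(\mathrm{Var}_w(S)-n/\beta_0^2\bigr)$ at a stationary $\beta_0$, together with the $\mathrm{TP}_2$/MLR structure in $(\beta,u)$---is a sensible line of attack, but as written it is not a proof: the moment inequality $n\,\mathrm{Var}_w(S)<(\mathbb{E}_wS)^2$ is asserted as ``the crux'' and left open, and monotone likelihood ratio alone does not yield such a variance--mean bound. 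The same remark applies to the $\alpha$ case. In short, your proposal goes strictly beyond the paper by flagging a genuine gap, but it does not close it; completing the argument would require actually establishing the claimed second-order inequality (or finding an alternative route, e.g.\ a variation-diminishing/total-positivity argument that controls sign changes of the integrated derivative directly).
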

	Now, in Theorem \ref{thm2.3}, we demonstrate the consistency of the estimators of $\alpha$ and $\beta$ obtained by maximizing $\ell_v(\alpha, \beta)$. 
	\begin{theorem}
		\label{thm2.3}
		Estimators based on the maximization of the likelihood $\ell_v(\alpha, \beta)$ are consistent estimators for $\alpha>0$ and $\beta > 0$.
		\begin{proof}
			To show the consistency of the estimators of $(\alpha, \beta)$ obtained by maximizing the likelihood function $\ell_v(\alpha,\beta)$, it suffices to prove the following result: \\
			For any fixed $\alpha \neq \alpha_0$ and $\beta \neq \beta_0$, where $\alpha_0$ and $\beta_0$ are true values of the parameters $\alpha$ and $\beta$,
			\begin{equation*}
				\lim_{n \to \infty} P \Big( \frac{\ell_v(\alpha,\beta;V_{(2)}, \dots, V_{(n)})}{\ell_v(\alpha_0,\beta_0;V_{(2)}, \dots, V_{(n)})} < 1 \Big) = 1 .
			\end{equation*}
			To proceed further, we recall the conditional joint PDF of the order statistics of $V_{(2)}, \dots, V_{(n)}$ given $Z_{(1)} = u$ (See Appendix \ref{app_1}). Here $Z_{(1)} = {X_{(1)} - \gamma_0}$ and $\gamma_0$ is true value of the parameter $\gamma$. The conditional joint PDF is as follows:
			\begin{align*}
				f_{V_{(2)}, \dots, V_{(n)}| Z_{(1)} = u}(v_2, \dots, v_{n} | u) & =  \frac{f_{Z_{(1)},V_{(2)}, \dots, V_{(n)}}(u, v_2, \dots, v_{n})}{f_{Z_{(1)}}(u)} & \\
				& = \frac{ n! g(u;\alpha_0, \beta_0) \Big\{ \prod_{i = 2}^{n} g(u+v_i;\alpha_0, \beta_0) \Big\} }{ n  g(u;\alpha_0, \beta_0) (1 - G(u;\alpha_0, \beta_0))^{n-1} } & \\
				& = (n-1)! \prod_{i=2}^{n} \frac{ g( u+v_i;\alpha_0, \beta_0)}{1 - G(u;\alpha_0, \beta_0)} & \\
				& = (n-1)! \prod_{i = 2}^{n} f_{V_i| Z_{(1)} = u} (v_{i}|u),&  
			\end{align*}
			where $f_{V_i| Z_{(1)} = u} (v_{i}|u)  = \frac{  g( u+v_i;\alpha_0, \beta_0)}{1 - G(u;\alpha_0, \beta_0)}$. Here, $V_{i}'s$ are \textit{i.i.d.} conditional on $ Z_{(1)} = u$, with a common conditional PDF $f_{V_i| Z_{(1)} = u} (.|u)$. Define 
			\begin{align*}
				\ell_{u}(\alpha,\beta;V_{(2)}, \dots, V_{(n)}) & = (n-1)! \prod_{i=2}^{n} \frac{  g( u+V_i;\alpha, \beta)}{1 - G(u;\alpha, \beta)} &
			\end{align*}
			and then, consider the following quantity for fixed $u>0$, conditioned on $Z_{(1)} = u$. For every $\alpha \neq \alpha_0$ and $\beta \neq \beta_0$,
			\begin{align}
				\label{eqq1}
				\frac{1}{n-1} & \ln\bigg\{ \frac{\ell_{u}(\alpha,\beta;V_{(2)}, \dots, V_{(n)})}{\ell_{u}(\alpha_0,\beta_0;V_{(2)}, \dots, V_{(n)})}\bigg\}  = \frac{1}{n-1} \sum_{i = 2}^{n} \ln \Bigg\{ \frac{  g( u+V_i;\alpha, \beta) / (1 - G(u;\alpha, \beta))}{ g( u+V_i;\alpha_0, \beta_0) / (1 - G(u;\alpha_0, \beta_0))} \Bigg\}. & 
			\end{align}
			By the law of large numbers, right hand side of the equation \ref{eqq1} converges to 
			\begin{align*}
				E  \Bigg(  \ln \Bigg\{ \frac{  g( u+V;\alpha, \beta) / (1 - G(u;\alpha, \beta))}{ g( u+V;\alpha_0, \beta_0) / (1 - G(u;\alpha_0, \beta_0))} \Bigg\} \bigg|  Z_{(1)} = u \Bigg). 
			\end{align*}
			Here, $V$ has a conditional PDF $f_{V| Z_{(1)} = u} (.|u)$ given $Z_{(1)} = u$. Now, by Jensen's inequality, 
			\begin{align*}
				E  \Bigg(  & \ln \Bigg\{ \frac{  g( u+V;\alpha, \beta) / (1 - G(u;\alpha, \beta))}{ g( u+V;\alpha_0, \beta_0) / (1 - G(u;\alpha_0, \beta_0))} \Bigg\} \bigg|  Z_{(1)} = u \Bigg) & \\
				& \leq \ln  \Bigg\{  E \Bigg( \frac{ g( u+V;\alpha, \beta) / (1 - G(u;\alpha, \beta))}{ g( u+V;\alpha_0, \beta_0) / (1 - G(u;\alpha_0, \beta_0))} \bigg|  Z_{(1)} = u \Bigg) \Bigg\} & \\
				& = \ln \Bigg\{  \int_{0}^{\infty} \frac{ g( u+v;\alpha, \beta)  }{(1 - G(u;\alpha, \beta))} \ d v \Bigg\} = 0. & 
			\end{align*}
			Therefore, it implies that
			\begin{align}
				\label{eqq2}
				& \lim_{n \to \infty} P  \Bigg(\ \frac{1}{n-1}\ln\bigg\{ \frac{\ell_{u}(\alpha,\beta;V_{(2)}, \dots, V_{(n)})}{\ell_{u}(\alpha_0,\beta_0;V_{(2)}, \dots, V_{(n)})}\bigg\} < 0 \ \bigg| \ Z_{(1)} = u\Bigg) = 1 & \nonumber 
			\end{align}
		    \begin{align}
				& \implies \lim_{n \to \infty} P  \Bigg(\  \frac{\ell_{u}(\alpha,\beta;V_{(2)}, \dots, V_{(n)})}{\ell_{u}(\alpha_0,\beta_0;V_{(2)}, \dots, V_{(n)})}  < 1 \ \bigg|  \ Z_{(1)} = u\Bigg) = 1 & \nonumber \\
				& \implies 	\lim_{n \to \infty} P \big( \ell(\alpha,\beta;V_{(2)}, \dots, V_{(n)}) <  \ell(\alpha_0,\beta_0;V_{(2)}, \dots, V_{(n)}) |  Z_{(1)} = u \big) = 1. &
			\end{align}	
			Moreover
			\begin{align}	
				& P (\ell(\alpha,\beta;V_{(2)}, \dots, V_{(n)}) <  \ell(\alpha_0,\beta_0;V_{(2)}, \dots, V_{(n)})) & \nonumber \\
				& = \int_0^\infty  P \big( \ell(\alpha,\beta;V_{(2)}, \dots, V_{(n)}) <  \ell(\alpha_0,\beta_0;V_{(2)}, \dots, V_{(n)}) |  Z_{(1)} = u \big)  f_{ Z_{(1)}  } (u)  \ du & \nonumber 
			\end{align} Therefore, using the Lebesgue's dominated convergence theorem and \eqref{eqq2}
			\begin{align}
				& \lim_{n \to \infty} P\big( \ell(\alpha,\beta;V_{(2)}, \dots, V_{(n)}) <  \ell(\alpha_0,\beta_0;V_{(2)}, \dots, V_{(n)}) \big) & \nonumber \\
				& =  \int_0^\infty \lim_{n \to \infty} P \big( \ell(\alpha,\beta;V_{(2)}, \dots, V_{(n)}) <  \ell(\alpha_0,\beta_0;V_{(2)}, \dots, V_{(n)}) |  Z_{(1)} = u \big)  f_{ Z_{(1)}  } (u)  \ du & \nonumber \\
				& = \int_0^\infty \lim_{n \to \infty}  f_{ Z_{(1)}  } (u)  \ du = \lim_{n \to \infty} \int_0^\infty  f_{ Z_{(1)}  } (u)  \ du  = 1.	&
			\end{align} Hence the result is proved.
		\end{proof}
	\end{theorem}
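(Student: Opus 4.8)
The plan is to run a Wald-type likelihood-comparison argument. Since the proposed estimator $(\hat\alpha_n,\hat\beta_n)$ maximizes $\ell_v(\alpha,\beta)$ and Theorem \ref{thm_2.2} already supplies a unique coordinatewise maximizer, it suffices by the standard Wald reduction to show that the true value asymptotically dominates every fixed alternative, i.e. that for each fixed $(\alpha,\beta)\neq(\alpha_0,\beta_0)$,
\begin{equation*}
\lim_{n\to\infty} P\big(\ell_v(\alpha,\beta) < \ell_v(\alpha_0,\beta_0)\big) = 1,
\end{equation*}
where $(\alpha_0,\beta_0)$ denote the true scale and shape. The reason a law-of-large-numbers argument does not apply directly is that $V_{(2)},\dots,V_{(n)}$ are not unconditionally independent: marginalizing over the unknown shift $u=X_{(1)}-\gamma_0$ couples them. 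My first move is therefore to condition on $Z_{(1)}=X_{(1)}-\gamma_0=u$. Given this event the increments $V_i=X_{(i)}-X_{(1)}$ become i.i.d.\ with the truncated density $f_{V\mid Z_{(1)}=u}(v\mid u)=g(u+v;\alpha_0,\beta_0)/(1-G(u;\alpha_0,\beta_0))$, where $g$ and $G$ are the density and CDF of the location-zero GE law.

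On this conditional i.i.d.\ scale the argument becomes routine. Writing the conditional likelihood $\ell_u(\alpha,\beta)=(n-1)!\prod_{i=2}^{n} g(u+V_i;\alpha,\beta)/(1-G(u;\alpha,\beta))$, the law of large numbers gives
\begin{equation*}
\frac{1}{n-1}\ln\frac{\ell_u(\alpha,\beta)}{\ell_u(\alpha_0,\beta_0)} \longrightarrow E\!\left[\ln\frac{g(u+V;\alpha,\beta)/(1-G(u;\alpha,\beta))}{g(u+V;\alpha_0,\beta_0)/(1-G(u;\alpha_0,\beta_0))}\,\Big|\,Z_{(1)}=u\right].
\end{equation*}
Applying Jensen's inequality to the concave logarithm bounds this limit above by the logarithm of the expected ratio, and a direct computation shows that the expected ratio equals $\int_0^\infty g(u+v;\alpha,\beta)/(1-G(u;\alpha,\beta))\,dv=1$, so the limit is $\le 0$. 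It is strictly negative as soon as the two truncated densities differ, which is precisely the identifiability of $(\alpha,\beta)$ from the conditional law given $Z_{(1)}=u$. Hence $P\big(\ell_u(\alpha,\beta)<\ell_u(\alpha_0,\beta_0)\mid Z_{(1)}=u\big)\to1$ for each fixed $u$.

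It remains to de-condition and, more delicately, to transfer the conclusion from the conditional likelihood $\ell_u$ to the marginal likelihood $\ell_v$ that actually defines the estimator. I would integrate the conditional probabilities against the density $f_{Z_{(1)}}$ of the minimum and pass the limit through the integral by dominated convergence, using that the conditional probabilities are bounded by $1$ and $f_{Z_{(1)}}$ is integrable, to obtain $P(\ell_v(\alpha,\beta)<\ell_v(\alpha_0,\beta_0))\to1$; together with Theorem \ref{thm_2.2} this yields consistency. I expect the real difficulty to sit exactly at this transfer, because rewriting Theorem \ref{thm_2.1} exhibits the marginal likelihood as the \emph{parameter-weighted} mixture $\ell_v(\alpha,\beta)=\int_0^\infty f_{Z_{(1)}}(u';\alpha,\beta)\,\ell_{u'}(\alpha,\beta)\,du'$, so the clean law of large numbers controls $\ell_{u'}$ rather than $\ell_v$ itself. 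To close this gap rigorously I would either prove a uniform-in-$u'$ version of the conditional convergence, so that the ordering $\ell_{u'}(\alpha,\beta)<\ell_{u'}(\alpha_0,\beta_0)$ survives the mixing integral, or establish a Laplace-type estimate showing that $n^{-1}\ln\ell_v$ inherits the limit of $n^{-1}\ln\ell_{u'}$ at the dominant scale; verifying the identifiability of the two-parameter truncated family, needed for the strict Jensen inequality, is the remaining ingredient I would check directly.
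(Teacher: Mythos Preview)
Your approach coincides with the paper's: condition on $Z_{(1)}=u$ to recover an i.i.d.\ structure for the $V_i$'s, apply the law of large numbers and Jensen's inequality to the conditional log-likelihood ratio $\ell_u(\alpha,\beta)/\ell_u(\alpha_0,\beta_0)$, then integrate out $u$ against $f_{Z_{(1)}}$ via dominated convergence.

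You are in fact more careful than the paper at the one delicate step. After establishing
\[
\lim_{n\to\infty}P\bigl(\ell_u(\alpha,\beta)<\ell_u(\alpha_0,\beta_0)\mid Z_{(1)}=u\bigr)=1,
\]
the paper silently replaces $\ell_u$ by an undecorated $\ell$ (evidently meant to be $\ell_v$) and proceeds as if the ordering transferred to the marginal likelihood, offering no argument for this passage. Your observation that $\ell_v(\alpha,\beta)=\int_0^\infty f_{Z_{(1)}}(u';\alpha,\beta)\,\ell_{u'}(\alpha,\beta)\,du'$ with \emph{parameter-dependent} mixing weights identifies exactly why the transfer is not automatic: even a pointwise (or uniform) ordering $\ell_{u'}(\alpha,\beta)<\ell_{u'}(\alpha_0,\beta_0)$ does not by itself yield $\ell_v(\alpha,\beta)<\ell_v(\alpha_0,\beta_0)$, because the two integrals are taken against different weights. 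The paper does not close this gap; your suggested remedies (a uniform-in-$u'$ comparison combined with control of the weights, or a Laplace-type argument showing $n^{-1}\ln\ell_v$ inherits the limit of $n^{-1}\ln\ell_{u'}$) are the natural routes, and the identifiability check you mention is indeed needed to make the Jensen inequality strict.
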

	We already have estimators based on the maximization of the likelihood function $\ell_v(\alpha, \beta)$ using the location-invariant data $v_2, v_3, \dots, v_{n}$. Let us refer to the estimators of $\alpha$ and $\beta$ based on the location-invariant data as $\widehat{\alpha}_v$ and $\widehat{\beta}_v$, respectively.
	
	In addition, it is appropriate to use $X_{(1)}$ as an estimator for the location parameter $\gamma$. Let us represent it as $\widehat{\gamma}_{init}$ \textit{i.e.}, $ \widehat{\gamma}_{init} = X_{(1)} $. Because $E(X_{(1)}) = \gamma + \alpha \int_{0}^{\infty} (1 - F_Y(y; \beta))^n\ dy$ with $Y \sim \text{GE}(1,\beta,0)$, it should be emphasized that $\widehat{\gamma}_{init}$ is a biased estimator of the parameter $\gamma$. It is therefore feasible to look at a bias-corrected estimate of $\gamma$, denoted by $\widehat{\gamma}_{v}$, as $X_{(1)} -  \widehat{\alpha}_v \int_{0}^{\infty} (1 - F_Y(y;\widehat{\beta}_v))^n\ dy$. 
	
	Now, we establish that $\widehat{\gamma}_v$ is a consistent estimator of $\gamma$. To explain the consistency of $\widehat{\gamma}_{init}$, consider the following probability for an arbitrary $\epsilon>0$:
	\begin{align}
		\label{eq_4}
		\text{P}(|\widehat{\gamma}_{init} - \gamma|> \epsilon) = \text{P}(X_{(1)} - \gamma > \epsilon) = \text{P}(n^{1/\beta}\frac{X_{(1)} - \gamma}{\alpha} > n^{1/\beta} \frac{\epsilon}{\alpha})
	\end{align} Using the Theorem $8$ of \cite{guptakundu1999}, it can be observed that the probability aforesaid in \eqref{eq_4} is less than $e^{-n (\epsilon / \alpha )^\beta }$ and  converges to $0$ as sample size $n$ approaches $\infty$, proving that $\widehat{\gamma}_{init}$ is consistent for $\gamma$. Recall, in order to show the consistency of the bias-corrected estimator, that $\widehat{\gamma}_v = X_{(1)} - \widehat{\alpha}_v \int_{0}^{\infty} (1 - (1 - e^{-y})^{\widehat{\beta}_v})^n\ dy$.  Using Slutsky's theorem and the facts that $\widehat{\alpha}_v$ and $\widehat{\beta}_{v}$ are consistent for $\alpha$ and $\beta$, respectively, it can be shown that $\widehat{\alpha}_v \int_{0}^{\infty} (1 - (1 - e^{-y})^{\widehat{\beta}_v})^n\ dy $ converges to $0$ in probability. Consequently, applying Slutsky's theorem once more,  $\widehat{\gamma}_v$ is consistent for $\gamma$.
	
	\noindent The above-mentioned detailed estimation approach under the LPF method described in this section is quickly summarized as follows:
	\begin{description} \vspace{-1em}
		\setlength{\itemsep}{0em}
		\item[Step 1.] Obtaining $\widehat{\alpha}_v$ and $\widehat{\beta}_v$, take $\widehat{\gamma}_{init} = X_{(1)}$.  
		\item[Step 2.] Utilize $\widehat{\alpha}_v$, $\widehat{\beta}_v$ and $\widehat{\gamma}_{init}$ of Step 1 to obtain the bias-corrected $\widehat{\gamma}_{v}$ as follows: \vspace{-1em}
		\begin{align*} 
			\widehat{\gamma}_v = & X_{(1)} -  \widehat{\alpha}_v \int_{0}^{\infty} (1 - (1 - e^{-y})^{\widehat{\beta}_v})^n\ dy.&
		\end{align*} 
	\end{description}
	\subsection{Quantile Estimation}
	\label{subsec2.3}
	We consider the estimation of the $\zeta$-th quantile of the three-parameter GE distribution with CDF given in \eqref{eq1}. 
	For $0<\zeta<1$, the $\zeta$-th quantile is the solution to the equation 
	$F(x_\zeta;\alpha,\beta,\gamma)=\zeta$ with respect to $x_\zeta$, which yields
	\begin{align}
		\label{eq11}
		x_\zeta = \gamma - \alpha \ln\left(1-\zeta^{1/\beta}\right).	
	\end{align} 
	In this study, the quantile estimator is obtained by substituting the LPF estimators of $\alpha$, $\beta$, and $\gamma$ into \eqref{eq11}. The following theorem establishes the consistency of the plug-in estimator $\widehat x_{\zeta,v}$.
	\begin{theorem}
		\label{thm_2.4}
		For a fixed $\zeta\in(0,1)$, and parameters $\alpha>0$, $\beta>0$, and $\gamma\in\mathbb{R}$, the $\zeta$-th quantile of the GE distribution is
		$ x_\zeta(\alpha,\beta,\gamma) = \gamma - \alpha \ln\!\left( 1 - \zeta^{1/\beta} \right)$ and its plug-in estimator based on the LPF estimators is
		\begin{align*}
			\widehat x_{\zeta,v}
			= x_\zeta(\widehat{\alpha}_v,\widehat{\beta}_v,\widehat\gamma_v)
			= \widehat\gamma_v
			- \widehat\alpha_v \ln\!\left( 1 - \zeta^{1/\widehat\beta_v} \right).
		\end{align*}
		Then
		\begin{align*}
			\widehat x_{\zeta,v} \xrightarrow{P} x_\zeta(\alpha,\beta,\gamma),
		\end{align*} where $\xrightarrow{P}$ denotes convergence in probability.
	\end{theorem}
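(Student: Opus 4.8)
The plan is to treat $\widehat{x}_{\zeta,v}$ as a continuous transformation of the triple $(\widehat{\alpha}_v,\widehat{\beta}_v,\widehat{\gamma}_v)$ and to invoke the consistency results already established together with the continuous mapping theorem and Slutsky's theorem. The building blocks are in place: Theorem \ref{thm2.3} gives $\widehat{\alpha}_v \xrightarrow{P} \alpha$ and $\widehat{\beta}_v \xrightarrow{P} \beta$, while the discussion following it shows $\widehat{\gamma}_v \xrightarrow{P} \gamma$. Since each marginal sequence converges in probability to a constant, the vector $(\widehat{\alpha}_v,\widehat{\beta}_v,\widehat{\gamma}_v)$ converges in probability jointly to $(\alpha,\beta,\gamma)$, and it remains only to transport this convergence through the quantile map.

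First I would peel off the $\beta$-dependence. For fixed $\zeta\in(0,1)$ the map $b\mapsto \zeta^{1/b}$ is continuous on $(0,\infty)$, so the continuous mapping theorem yields $\zeta^{1/\widehat{\beta}_v}\xrightarrow{P}\zeta^{1/\beta}$. Because $\beta>0$ and $\zeta\in(0,1)$, the limit satisfies $\zeta^{1/\beta}\in(0,1)$, hence $1-\zeta^{1/\beta}\in(0,1)$ lies strictly inside the domain where $\ln(\cdot)$ is continuous; applying the continuous mapping theorem again gives $\ln\!\left(1-\zeta^{1/\widehat{\beta}_v}\right)\xrightarrow{P}\ln\!\left(1-\zeta^{1/\beta}\right)$.

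Next I would combine the pieces via Slutsky's theorem. The product $\widehat{\alpha}_v\ln\!\left(1-\zeta^{1/\widehat{\beta}_v}\right)$ converges in probability to $\alpha\ln\!\left(1-\zeta^{1/\beta}\right)$, and adding the consistent location term $\widehat{\gamma}_v$ delivers
\[
\widehat{x}_{\zeta,v} = \widehat{\gamma}_v - \widehat{\alpha}_v\ln\!\left(1-\zeta^{1/\widehat{\beta}_v}\right) \xrightarrow{P} \gamma - \alpha\ln\!\left(1-\zeta^{1/\beta}\right) = x_\zeta(\alpha,\beta,\gamma),
\]
which is exactly the assertion of the theorem.

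The main obstacle—mild in this setting—is ensuring that the argument of the logarithm stays away from the boundary of $(0,1)$, since $\ln$ diverges as its argument tends to $0$ and continuity could fail there. This is resolved by the observation above that the true parameter forces $1-\zeta^{1/\beta}$ to be bounded strictly between $0$ and $1$, so that a neighborhood argument legitimizes the continuous mapping step. No uniform control over the whole parameter space is required, because convergence in probability only demands continuity of the transformation at the limit point $(\alpha,\beta,\gamma)$.
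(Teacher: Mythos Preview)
Your proposal is correct and follows essentially the same approach as the paper: both arguments rest on the consistency of $(\widehat{\alpha}_v,\widehat{\beta}_v,\widehat{\gamma}_v)$ established earlier and on the continuity of the quantile map $(\alpha,\beta,\gamma)\mapsto \gamma-\alpha\ln(1-\zeta^{1/\beta})$ at the true parameter, invoking the continuous mapping theorem. The only cosmetic difference is that the paper applies the continuous mapping theorem once to the full trivariate map, whereas you decompose it into successive one-dimensional continuous mappings and then combine them via Slutsky; the two presentations are logically equivalent.
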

	\begin{proof}
		Define the mapping $g:(0,\infty)\times(0,\infty)\times\mathbb{R}\to\mathbb{R}$ by $g(\alpha,\beta,\gamma)
		= \gamma - \alpha \ln\!\left( 1 - \zeta^{1/\beta} \right)$. For any $\beta>0$ and $\zeta\in(0,1)$ we have $0<\zeta^{1/\beta}<1$, and therefore
		$1 - \zeta^{1/\beta} \in (0,1)$ and its logarithm is finite. 
		Thus $g$ is well defined and continuous for all $(\alpha,\beta,\gamma)$ with $\beta>0$.\\
		\noindent From the consistency results of the LPF estimators established in 
		Section~\ref{subsec2.1}, we have
		\begin{align*}
			(\widehat\alpha_v,\widehat\beta_v,\widehat\gamma_v)
			\xrightarrow{P} (\alpha,\beta,\gamma),
		\end{align*}
		and since $g$ is continuous at $(\alpha,\beta,\gamma)$, the Continuous Mapping Theorem (Theorem 1.10 in \cite{shao2003mathematical}) implies
		\begin{align*}
			\widehat x_{\zeta,v}
			= g(\widehat\alpha_v,\widehat\beta_v,\widehat\gamma_v)
			\xrightarrow{P}
			g(\alpha,\beta,\gamma)
			= x_\zeta(\alpha,\beta,\gamma).
		\end{align*}
		This proves the result.
	\end{proof}
	\begin{remark}
		The above theorem requires $\zeta\in(0,1)$. For $\zeta=0$, the quantile equals $\gamma$, so $\widehat x_{0,v} = \widehat\gamma_v \xrightarrow{P} \gamma$. For $\zeta$ very close to $1$, numerical instability may arise, although the plug-in estimator remains consistent.
	\end{remark}
	\section{Simulation Study}
	\label{section3}
	We evaluate the performance of the proposed estimator through a Monte Carlo simulation study. The proposed method of estimation is the LPF method. We compare the performance of the LPF method to the LSPF method and two other modified maximum likelihood estimation methods, because these methods provide estimators for the entire parameter space. Since the MLE does not exist when $\beta< 1$, the LPF method is compared to the MLE when the shape parameter is assumed to be greater than or equal to $1$. The comparisons are based on the estimators' biases and the RMSE. These two modified maximum likelihood estimators (MMLEs) are MMLE I and MMLE III, that are discussed in Section 3 of  \cite{prajapat2021consistent}. Recall that MMLE I is the most convenient estimation method, which has been recently implemented by \cite{pasari2014three} and \cite{raqab2008estimation}, whereas the MMLE III method was proposed by \cite{hall2005bayesian}. In this article, the MMLE I method is modified using the bias-corrected estimator of $\gamma$ before being implemented.
	
	All results are obtained using the Monte Carlo simulation with the shape parameter $\beta$ set to $ 0.50,~ 0.75,~ 1.00, ~1.50, ~2.00,~ 3.00$ and the sample size $n$ set to $20,~ 50$, $100$ and $200$. All simulation results  in terms of biases and RMSEs are provided based on $10,000$ simulations. Since the GE distribution belongs to a location-scale family, simulations are conducted under the standardized parameterization $\alpha=1$ and $\gamma=0$. Results for general $\alpha>0$ and $\gamma\in\mathbb{R}$ follow by simple location-scale transformation. We present bias and RMSE for the estimators of location, scale, and shape parameters in Subsection \ref{subsec3.1} and for the estimators of quantiles in Subsection \ref{subsec3.2}.    
	
	During the simulation study, a certain proportion of the generated datasets has to be rejected. This proportion is reported in the last column of the corresponding tables. These rejections occur when the resulting parameter estimates are clearly unreasonable or numerically unstable, for example, an estimate such as $10^7$ for the shape parameter. Such values indicate that the corresponding dataset is not suitable for evaluating estimators' performance and therefore must be excluded. To implement this, we choose an upper cutoff value for the shape parameter estimate, denoted by $\beta_{U}$. This threshold is determined, for instance, by examining the histogram of the simulated estimates and identifying a reasonable upper bound beyond which the estimates are implausible. Therefore, the proportion of rejection is defined as
	\begin{align*}
		p = \frac{\text{Number of samples with } \widehat{\beta}_{v} \ge \beta_{U}}
		{\text{Total number of simulations}}.
	\end{align*}
	Since estimation in three-parameter models for very small samples (e.g., $n \le 10$) is well known to produce highly unstable estimates with large biases, very high variability, and frequent non-convergence (see \cite{gupta2001generalized}), we focus on moderate and larger sample sizes $n = 20, 50, 100, 200$, as we observe similar issues for such small samples. Moreover, for such very small sample sizes, the rejection proportion $p$ also becomes extremely high. Furthermore, in the case of the LSPF method, numerical results for $n=200$ are omitted due to a computational issue, where the likelihood calculations exceed machine precision and produce values that are out of bounds during the simulations. 
	
	We also evaluate the computational efficiency of the LPF method, since computation time is an important practical consideration. The total time required for any estimation method depends on several factors, such as the value of the shape parameter~$\beta$, the sample size~$n$, and the number of Monte Carlo replications. Among these, the sample size has the strongest influence, and therefore we report computation times for different values of $n$.
	
	All computations were carried out in \texttt{R}, using the \texttt{foreach} and \texttt{doParallel} packages for parallel execution on a high-performance computing (HPC) cluster with 40 processing cores. For each method, we generate $25,000$ samples, and the first $10,000$ samples are used to obtain the estimates. Thus, the computational times reported in Table~\ref{tab:comp_time} are compared based on $25,000$ Monte Carlo iterations. The \texttt{R} implementations of the MLE, MMLE I, and MMLE III methods produce numerical results within seconds or minutes, and therefore their computational cost is negligible compared to that of the LSPF and LPF methods.
	\begin{table}[ht]
		\centering
		\caption{Computational time (LPF vs.\ LSPF) for different sample sizes $n$ at $\beta = 1.50$.}
		\label{tab:comp_time}
		\begin{tabular}{lllll}
			\toprule
			Method & $n=20$ & $n=50$ & $n=100$ & $n=200$ \\ 
			\midrule
			LSPF  & 18.50 hrs & 27.07 hrs & 33.90 hrs & --   \\
			LPF   &  2.80 mins & 3.20 mins & 3.60 mins & 6.00 mins \\
			\bottomrule
		\end{tabular}
	\end{table}
	The results in Table~\ref{tab:comp_time} clearly show the substantial computational advantage of the LPF method. For instance, when $\beta = 1.50$ and $n = 100$, the LSPF method requires approximately 33.90 hours, whereas the LPF method completes the same number of replications in only 3.60 minutes. This significant reduction in computational time demonstrates the practical benefit of the LPF method, particularly for large sample sizes or simulation-intensive studies. All \texttt{R} source codes used to generate the simulation results and data analyses are provided as supplementary material to ensure full reproducibility.
	\subsection{Evaluation of Parameter Estimation} 
	\label{subsec3.1}
	In Tables \ref{table1}-\ref{table2}, all numerical simulations for estimating the parameters are presented.  The shape parameter is assumed to be greater than $1$ in Table \ref{table2} so that the MLEs are computed sensibly. To get a clear picture of the performance of the estimates of all three parameters, their biases and RMSEs are plotted in Figures \ref{fig_1}-\ref{fig_3} against varying $\beta$ values. Now, we will attempt to summarize the results of the simulation study performed. The following observations are based on the simulation study reported in the first two tables:
	\begin{figure}
		\centering
		\begin{shaded}
			\begin{adjustbox}{width=1\textwidth,height=0.55\textwidth}
				\includegraphics[width=\textwidth]{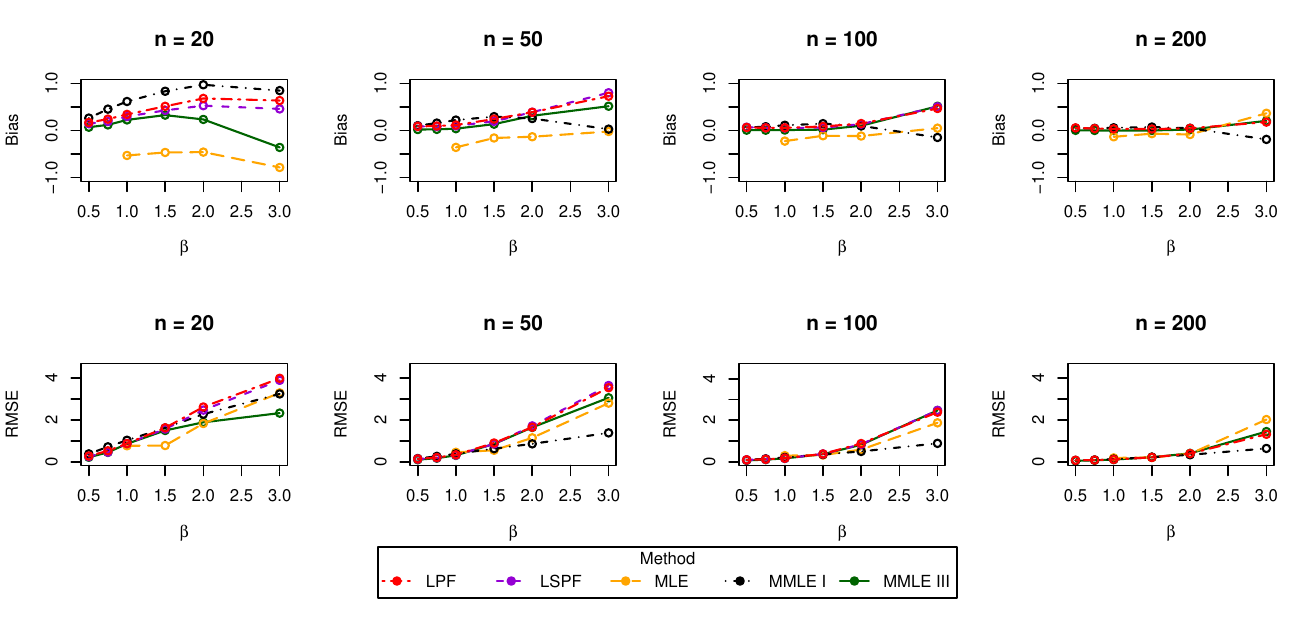}
			\end{adjustbox}
		\end{shaded}
		\caption{Plots for bias and RMSE of the estimator of shape parameter based on various estimation methods varying $\beta$ values.}
		\label{fig_1}
	\end{figure}
	\begin{figure}
		\centering
		\begin{shaded}
			\begin{adjustbox}{width=1\textwidth,height=0.55\textwidth}
				\includegraphics[width=\textwidth]{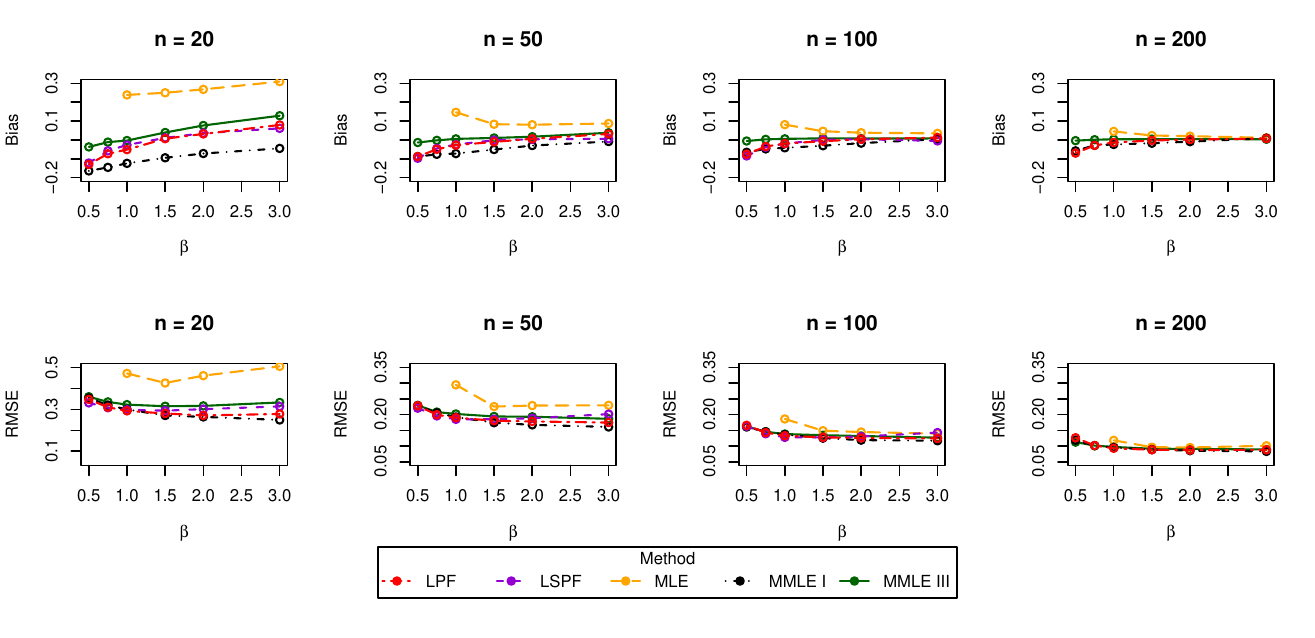}
			\end{adjustbox}
		\end{shaded}
		\caption{Plots for bias and RMSE of the estimator of scale parameter based on various estimation methods varying $\beta$ values.}
		\label{fig_2}
	\end{figure}
		\begin{figure}
			\centering
			\begin{shaded}
				\begin{adjustbox}{width=1\textwidth,height=0.55\textwidth}
					\includegraphics[width=\textwidth]{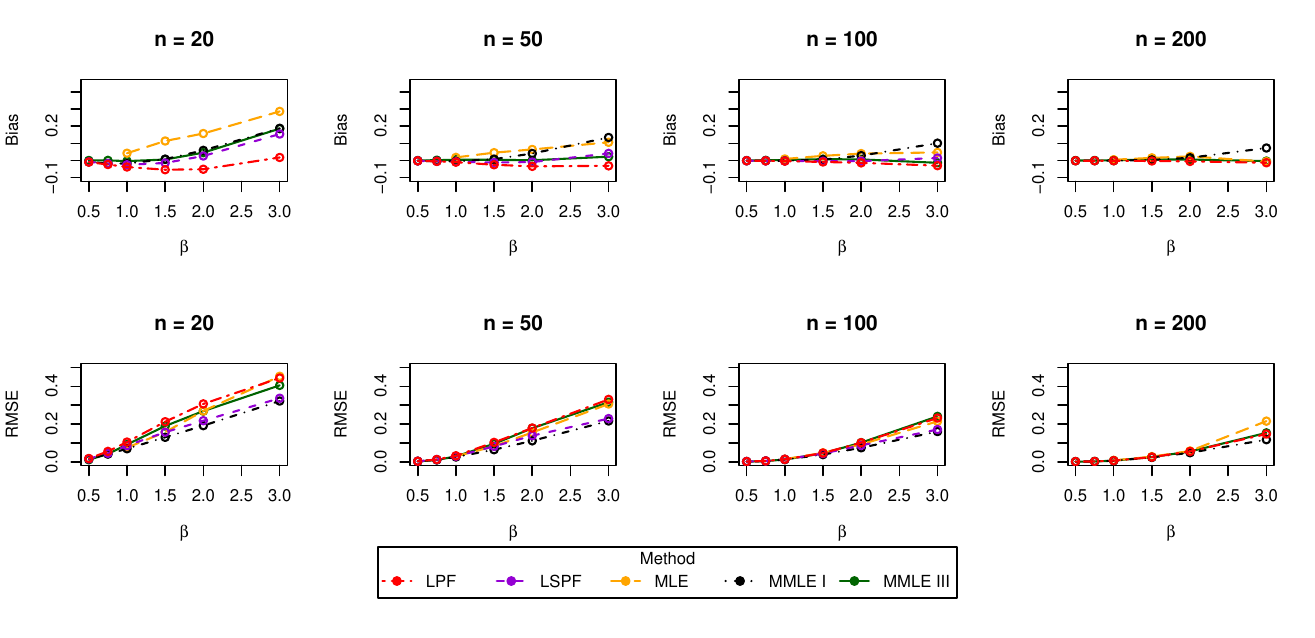}
				\end{adjustbox}
			\end{shaded}
			\caption{Plots for bias and RMSE of the estimator of location parameter based on various estimation methods varying $\beta$ values.}
			\label{fig_3}
		\end{figure}
	\begin{itemize}
		\item As the value of the shape parameter $\beta$ approaches zero or as the sample grows in size, the performance of each method for estimating all parameters improves.
		\item MMLE I: When $\beta\leq 1$, it underestimates the scale parameter, and when $\beta > 1$, it overestimates the scale parameter. Also, the shape parameter is underestimated when $\beta > 1$.
		\item LSPF and LPF methods are nearly always similar in performance when estimating the shape and scale parameters.
		\begin{itemize}
			\item To estimate the shape parameter:
			\begin{itemize}
				\setlength{\topsep}{0em}
				\item If $\beta > 1$, it is recommended to use MMLE III for $n \leq 20$, MLE for $20 < n \leq 100$, and either LPF or MMLE III for $n > 100$.
				\item If $ \beta \leq 1$, it is best advised to use MMLE III.
			\end{itemize}
			\item To estimate the scale parameter:
			\begin{itemize}
				\item When $ \beta > 1$, LPF is suggested as the first choice due to its small bias and RMSE, followed by MMLE III as the second choice because it performs slightly worse than LPF.
				\item If $\beta \leq 1$, it is best advised to use MMLE III.
				\item For small sample sizes, such as $n \leq 20$, we strongly discourage the use of MLE. In this situation, LPF or MMLE I is preferable if RMSE is the prime concern while MMLE III is preferable if bias is the prime concern.
				\item LSPF and LPF perform similar and outperform all other methods.
			\end{itemize}
		\end{itemize}
		\item To estimate the location parameter:
		\begin{itemize}
			\item LPF and LSPF perform similar in terms of biases and RMSEs for $\beta \leq 2$. As $\beta$ moves away from $2$ when $n \leq 50$, LPF method surpasses the LSPF method in terms of bias whereas LSPF outperforms LPF in terms of RMSE. The explanation for this might be that the LSPF produces estimates individually and sequentially, using the estimates obtained in the previous sequence, resulting in a substantial accumulation of biases in the sequences. This is one disadvantage of the LSPF approach over the LPF method.
			\item LPF consistently performs better than all other methods in terms of RMSE and bias, followed by MMLE III, when estimating location parameters. We do not advise MMLE I.
		\end{itemize}
		\item MMLE I is the only method that estimates parameters with a proportion of rejection of approximately zero, regardless of $\beta$ value and sample size. On the basis of the reported proportion of rejections, we recommend using this method as a first preference. If we only consider the LSPF and LPF methods, LSPF has a lower proportion of rejection than LPF. However, if we compare LPF to all other methods including LSPF, we notice the following: After MMLE I, the second preference is LPF for $ 0.5 \leq \beta < 3$ and MLE for relatively large values of $(\beta \geq 3)$; when $\beta$ is very small $(\beta \approx 0.5)$, it has been observed that all the methods have a very small proportion of rejection ($\approx 0$) for sample size $n \geq 50$ and therefore any method can be recommended but when $n \leq 20$, we advise to use LPF.
	\end{itemize}
	\subsection{Evaluation of Quantile Estimation}
	\label{subsec3.2}
	In practice, it is essential to estimate all three parameters in such a way that their combined performance is satisfactory. One may consider the quantile estimation as a possible solution to this problem, in which all parameters are utilized to estimate a qantile of the distribution. We therefore present quantile estimates for $\zeta = 0.01, ~0.05, ~0.10, ~0.25, ~0.50, ~ 0.75, ~ 0.90, ~ 0.95, ~ 0.99$ using equation \eqref{eq11} based on Monte Carlo simulation for the same values of shape parameter and sample size as in Subsection \ref{subsec3.1}. 
	
	\begin{figure}
		\begin{shaded}
			\begin{subfigure}[b]{1\textwidth}
				\begin{adjustbox}{width=1\textwidth,height=0.42\textwidth}
					\includegraphics[width=\textwidth]{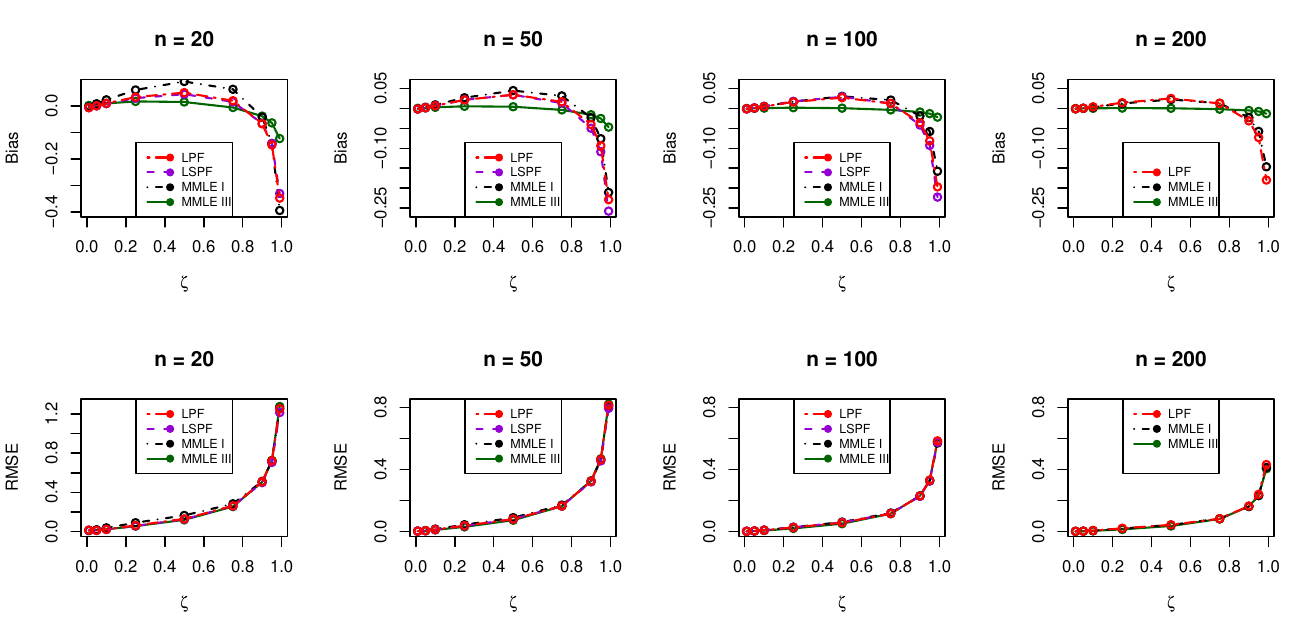}
				\end{adjustbox}
				\caption{$\beta=0.5$}
			\end{subfigure} \\
			\begin{subfigure}[b]{1\textwidth}
				\begin{adjustbox}{width=1\textwidth,height=0.42\textwidth}
					\includegraphics[width=\textwidth]{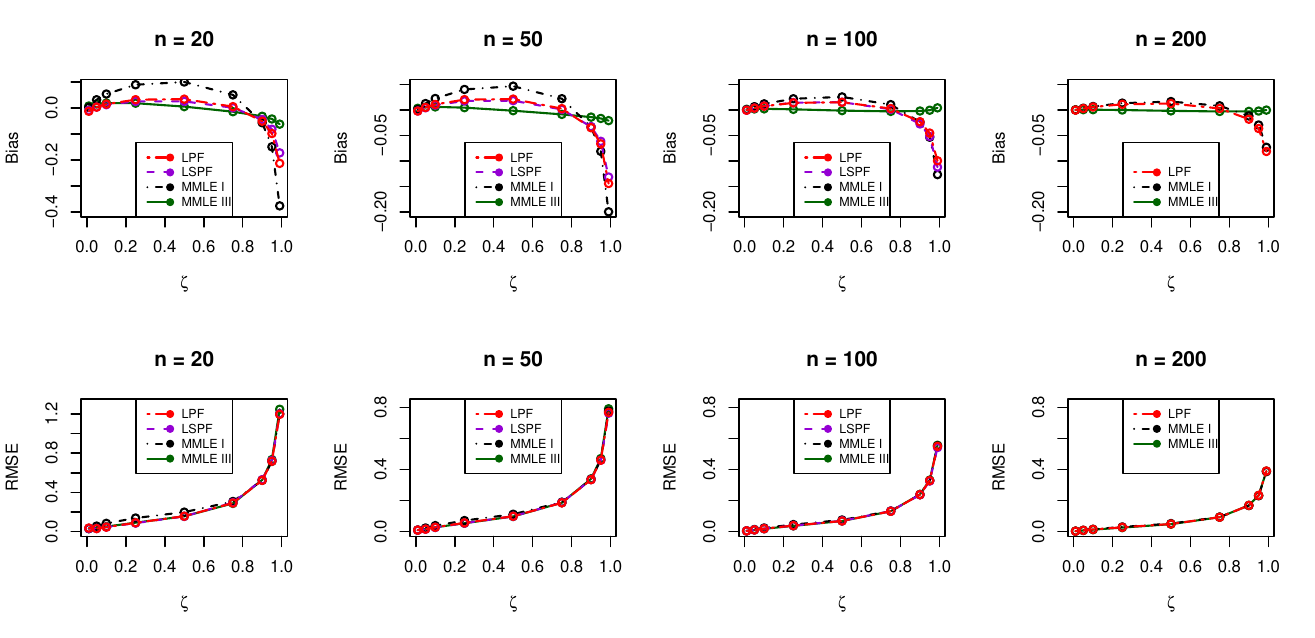}
				\end{adjustbox}
				\caption{$\beta=0.75$}
			\end{subfigure} \\
			\begin{subfigure}[b]{1\textwidth}
				\begin{adjustbox}{width=1\textwidth,height=0.42\textwidth}
					\includegraphics[width=\textwidth]{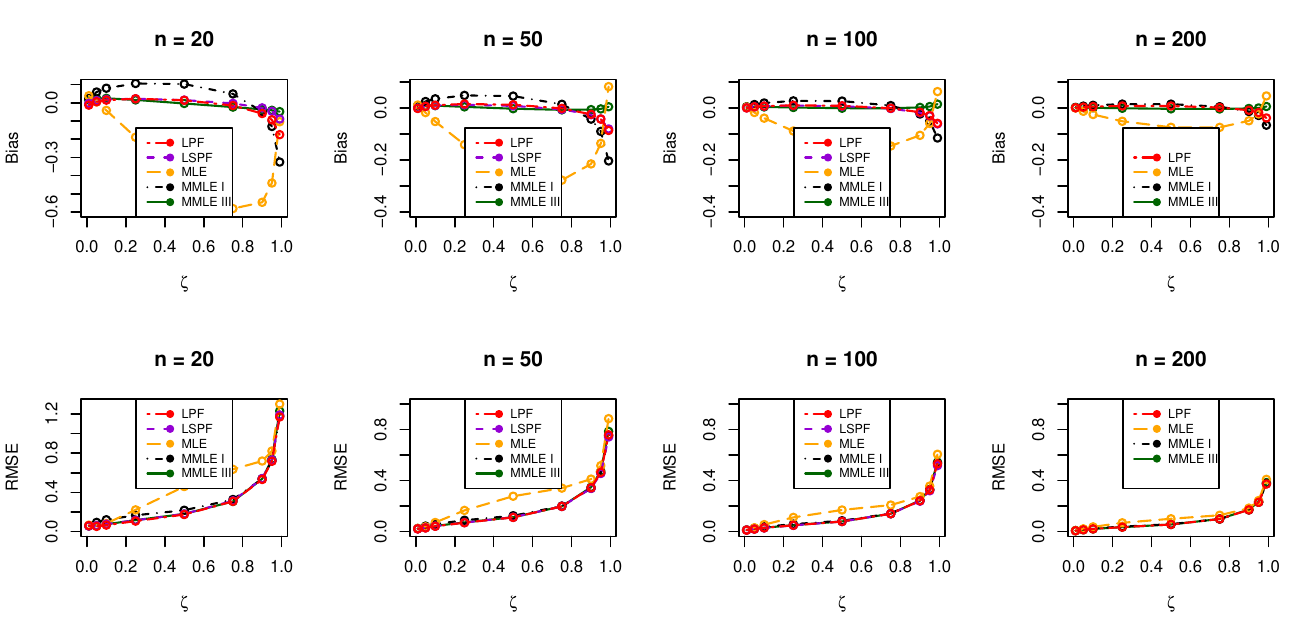}
				\end{adjustbox}
				\caption{$\beta=1.00$}
			\end{subfigure}
		\end{shaded}
		\caption{Plots for bias and RMSE based on various estimation methods when $\beta=0.50, ~ 0.75, ~ 1.00$.}
		\label{fig_4}
	\end{figure}
	\begin{figure}
		\begin{shaded}
			\begin{subfigure}[b]{1\textwidth}
				\begin{adjustbox}{width=1\textwidth,height=0.42\textwidth}
					\includegraphics[width=\textwidth]{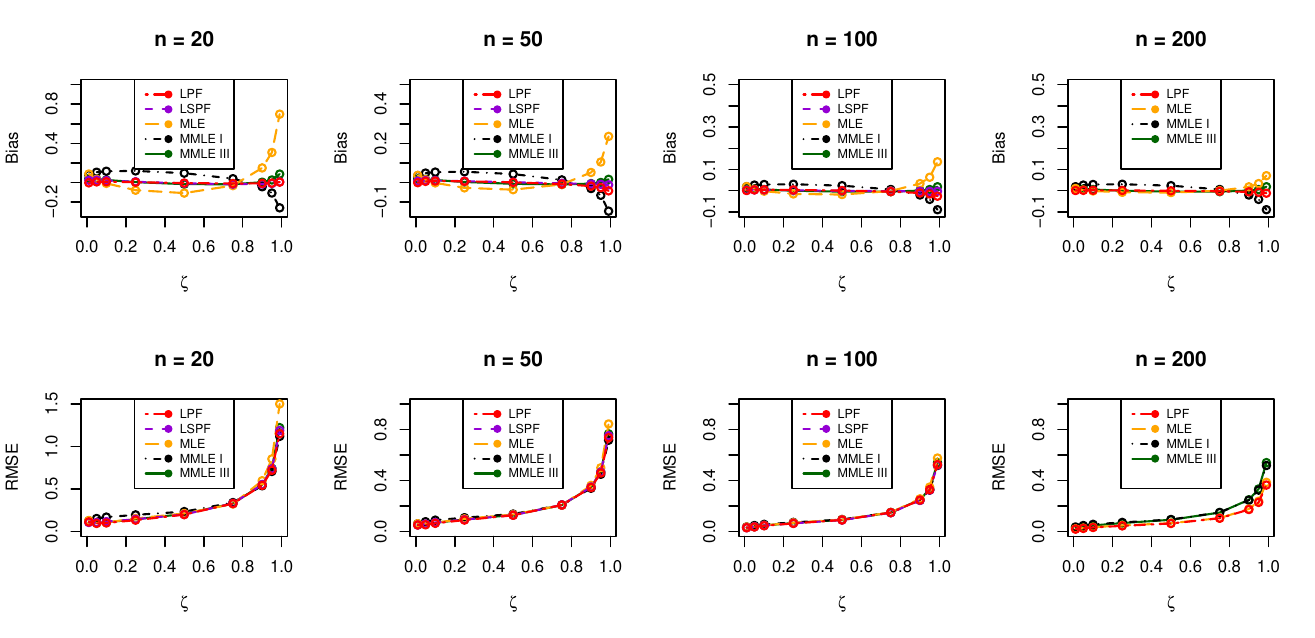}
				\end{adjustbox}
				\caption{$\beta=1.5$}
			\end{subfigure} \\
			\begin{subfigure}[b]{1\textwidth}
				\begin{adjustbox}{width=1\textwidth,height=0.42\textwidth}
					\includegraphics[width=\textwidth]{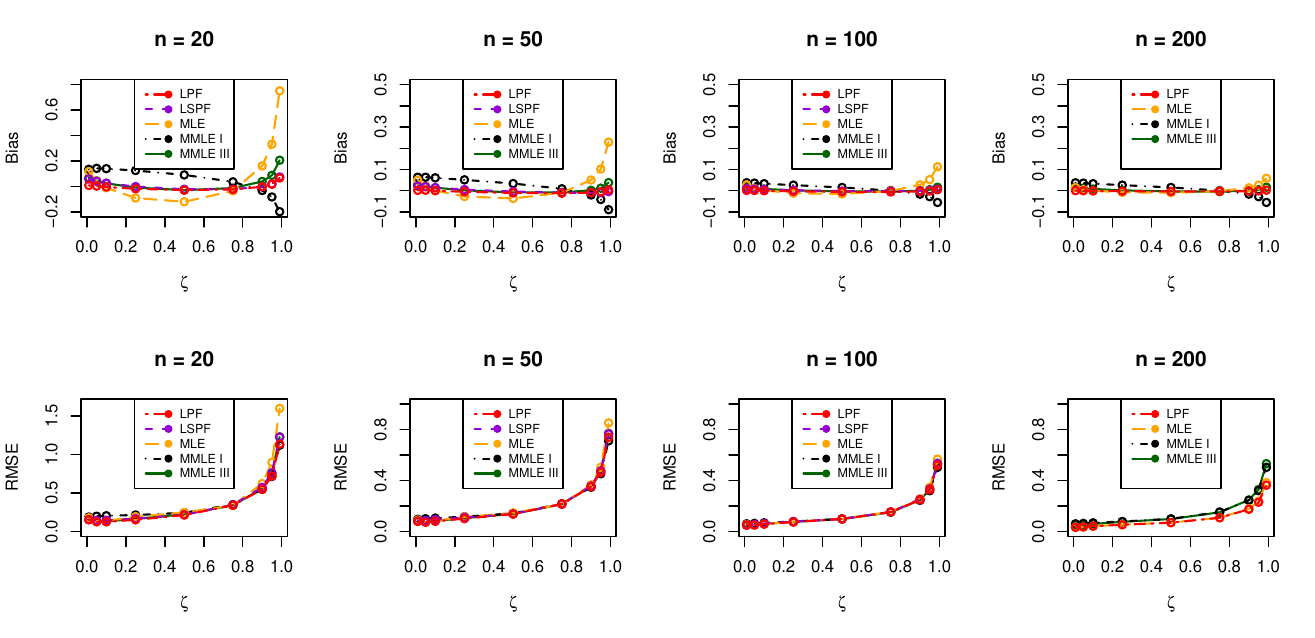}
				\end{adjustbox}
				\caption{ $\beta=2.0$}
			\end{subfigure} \\
			\begin{subfigure}[b]{1\textwidth}
				\begin{adjustbox}{width=1\textwidth,height=0.42\textwidth}
					\includegraphics[width=\textwidth]{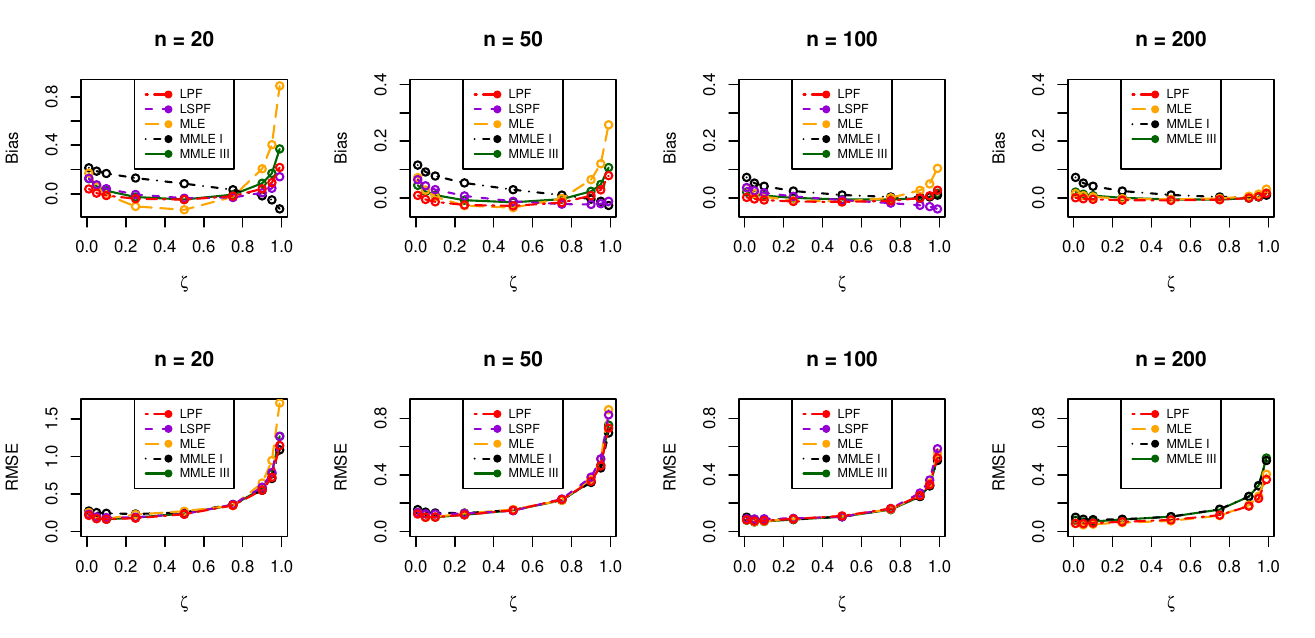}
				\end{adjustbox}
				\caption{$\beta=3.0$}
			\end{subfigure}
		\end{shaded}
		\caption{Plots for bias and RMSE based on various estimation methods when $\beta=1.5, ~ 2.0, ~ 3.0.$}
		\label{fig_5}
	\end{figure}
	In Tables \ref{table3}-\ref{table6}, the biases and RMSEs of the quantile estimation are shown. Figures \ref{fig_4}-\ref{fig_5} exhibit plots of the biases and RMSEs related to each method as the parameter value $\beta$ is varied for each of the sample sizes previously studied. The following can be deduced from these results:
	\begin{itemize}
		\item The absolute value of the bias grows as $\zeta$ approaches towards $0 ~\text{or} ~ 1$ from $0.5$, whereas the RMSE increases as $\zeta$ approaches to $1$ from $0$.
		\item All methods are equivalent in terms of RMSE, particularly when $n > 20$. 
		\item LPF and LSPF comparisons: LPF and LSPF methods yield similar performances almost everywhere except for the situation when $\beta > 2$ and $n <100$. For this exception, we recommend LPF when $\zeta$ is small $(<0.2)$, LSPF when $\zeta$ is close to $1$ $(>0.8)$ and any of the LPF or LSPF when $0.2 \leq \zeta \leq 0.8$.
		\item In terms of bias and RMSE, LPF performs much better than all other methods when $ \beta \geq 1$. It has the lowest bias values of the quantile estimates. Consequently, LPF is strongly advised when $ \beta \geq 1$.
		\item When $ \beta \leq 1$, MMLE III provides the highest performance for any choice of $0 < \zeta < 1$, although LPF performs very well for $0 < \zeta < 0.7$ and much better as $n$ grows. In this instance, we advise LPF for mainly sample sizes $n >20$.
	\end{itemize}
	\section{Illustrative Example}
	\label{section4}
	
The GE distribution is generally used for modeling a wide range of real-world skewed datasets. For instance, such data can be originated from the field of medical sciences, reliability engineering, insurance, and economics. In this section, we consider an electrical failure dataset and illustrate its analysis using the three-parameter GE distribution. For electrical lifetime data, the location parameter and higher quantiles are of particular importance in engineering applications.
	
We analyze a real dataset initially reported by \cite{bain1987introduction} (see example $4.6.3$ on page 162) and subsequently analyzed by \cite{prajapat2021consistent}. The data indicate the observed lifetimes in months of a random sample of $40$ electrical components. Histogram plot of the dataset reveals that the shape of the density is reversed `J' shaped and a simple data analysis provides its summary in Table \ref{table_DA_E2}.

Based on the simulation study, for this type of electrical lifetime data, where the location parameter and higher quantiles are most relevant for engineering applications, either LPF or LSPF is recommended for estimating the location parameter when $\beta \leq 2$. For high-quantile estimation, the simulation results indicate that LPF generally yields lower bias and RMSE when $\beta \geq 1$. In this example, our aim is simply to illustrate how the different methods behave on a practical dataset. We now discuss estimates for the model parameters presented in the Table \ref{table_DA_E2}. 

\begin{table}[ht]
	\centering
	\caption{Electrical lifetime data, its summary and estimates of the parameters.}
	\begin{tabular}{lccccccc}
		\toprule
		\multicolumn{8}{c}{Data:}\\
		\multicolumn{8}{c}{0.15, 2.37, 2.90, 7.39, 7.99, 12.05, 15.17, 17.56, 22.40, 34.84, 35.39,  36.38, 39.52, 41.07, 46.50, } \\
		\multicolumn{8}{c}{  50.52,	52.54, 58.91, 58.93, 66.71, 71.48, 71.84, 77.66, 79.31, 80.90, 90.87, 91.22, 96.35, } \\
		\multicolumn{8}{c}{ 108.92, 112.26, 122.71, 126.87, 127.05, 137.96, 167.59, 183.53, 282.49, 335.33, 341.19, 409.97}  \\
		\midrule
		\multicolumn{8}{c}{Summary:}\\
		\multicolumn{8}{c}{~ Min. ~~ 1st Qu. ~ Median ~ Mean ~ 3rd Qu. ~ Max. ~~~ Skew ~ Kurtosis} \\
		\multicolumn{8}{c}{~ 0.15 ~~~~~ 35.25 ~~~ 69.09 ~~~~ 93.12 ~~ 114.87 ~~~ 409.97 ~~ 1.74 ~~~~~ 2.49 ~~ } \\
		\midrule
		\multirow{2}{*}{Method} & \multicolumn{3}{c}{Estimates} & \multicolumn{2}{c}{KS} & \multicolumn{2}{c}{CvM} \\ \cmidrule(r){2-4} 
		\cmidrule(r){5-6} \cmidrule(r){7-8}
		& Shape &Scale & Location & Statistic & p-value & Statistic & p-value \\ \midrule
		LPF & 1.0821 & 91.1620  & -2.7991 &  0.0836 & 0.9323 & 0.0414 & 0.9258 \\
		LSPF & 1.0799 & 91.2659  & -2.7673 & 0.0838 & 0.9311 & 0.0415 & 0.9252  \\
		MMLE I  &  1.2564	& 85.0422 & -2.7578 & 0.0897 & 0.8899 & 0.0543 & 0.8495  \\  
		MMLE III  &  1.0408 & 92.1752 & -1.4577 &  0.0851 & 0.9231 &  0.0419 & 0.9229 \\
		\bottomrule
	\end{tabular}
	\label{table_DA_E2}
\end{table}

The parameter estimates for all methods considered in this study are presented in Table \ref{table_DA_E2}, together with the KS distance statistic, the CvM test statistic, and their associated $p$-values. To demonstrate the LPF approach, we maximize the likelihood function $\ell_v(\alpha, \beta)$ in equation \eqref{eq4} with respect to the parameters $\alpha$ and $\beta$. We estimate $\alpha$ and $\beta$ to be $91.1620$ and $1.0821$, respectively. By following the steps described in Section \ref{sec2}, we estimate $\gamma$ to be $-2.7991$ using the estimators of $\alpha$ and $\beta$. According to the results reported in Table \ref{table_DA_E2} based on the CvM test, LPF, LSPF and MMLE III perform really good for this particular dataset, while LPF performs best. In Table \ref{table_DA_Qs_E2}, quantiles estimates are reported for this particular dataset.

\begin{figure}[ht]
	\centering
	\subcaptionbox{\label{subfig_61}}%
	{\includegraphics[width=0.49\linewidth, height=0.4\linewidth]{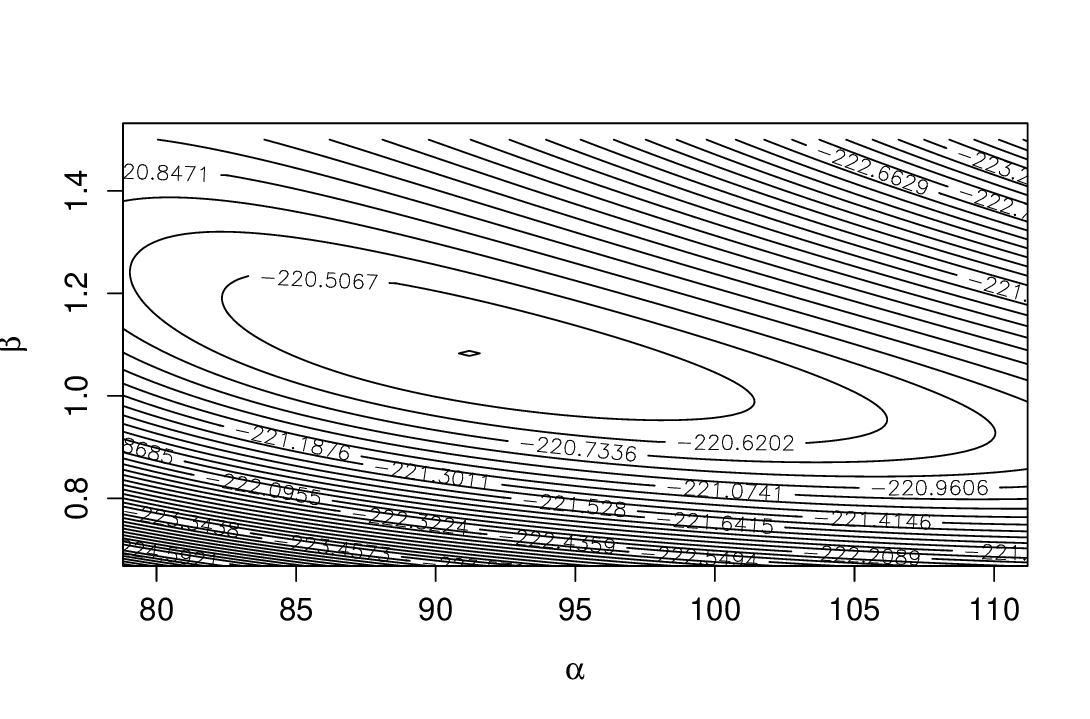}}
	\hfill
	\subcaptionbox{\label{subfig_62}}%
	{\includegraphics[width=0.49\linewidth, height=0.4\linewidth]{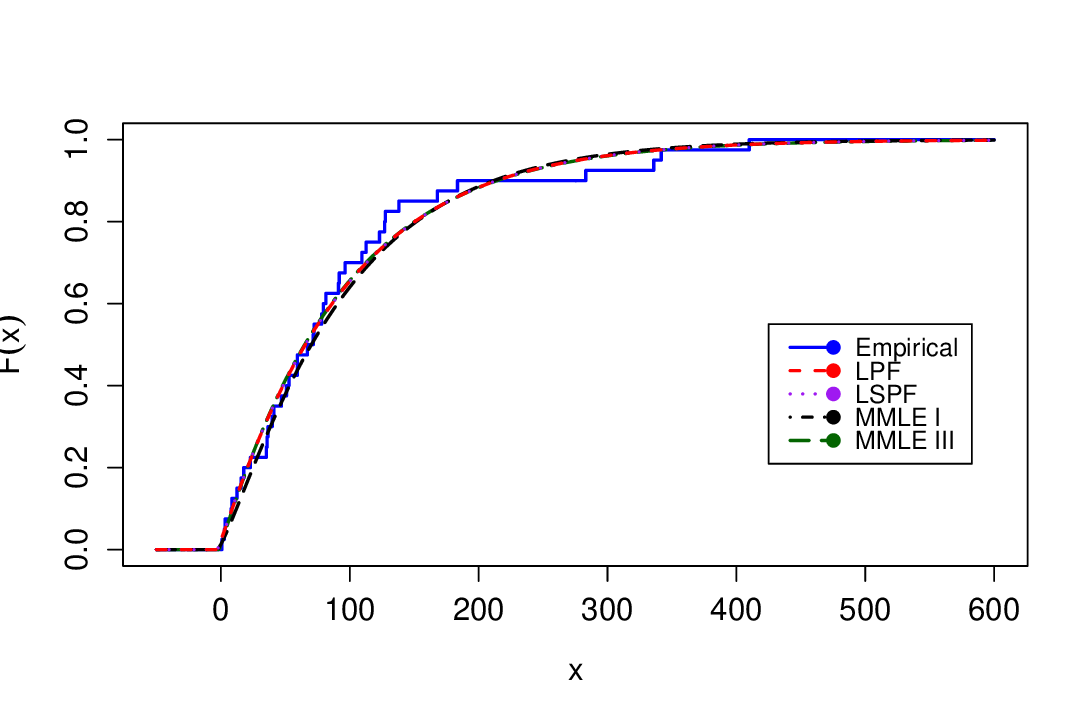}}
	\caption{(a) Plot of the log-likelihood function $\ln (\ell_v(\alpha, \beta))$; (b) Fitted CDF plots along with the empirical CDF.}
	\label{fig_6}
\end{figure}
\begin{table}[ht]
	\caption{Various quantile estimates for the lifetime model of the electrical components under the illustrative example.}
	\centering
	\small
	\begin{tabular}{cccccccccc}
		\toprule
		\multirow{2}{*}{Method} & \multicolumn{8}{c}{$\zeta$} &    \\ \cmidrule{2-10}
		&  0.01 & 0.05 & 0.10 & 0.25 & 0.50 & 0.75 & 0.90 & 0.95 & 0.99  \\ \midrule
		LPF  & -1.4970 &   3.1096 &   8.7601 &  26.8607 &  65.4500 & 129.8222 & 213.9440 & 277.3149 & 424.1757 \\
		LSPF & -1.4750 &   3.1138  &  8.7517  & 26.8349  & 65.4270 & 129.8431 & 214.0460 & 277.4846 & 424.5093 \\
		MMLE I  & -0.5531  & 5.4629 & 12.0672 & 31.5204 & 70.2034 & 132.1558 & 211.5682 & 270.9731 & 408.1978 \\ 
		MMLE III  & -0.3469    &3.8768  &  9.2265 &  26.7915 &  65.0078  &129.5150 & 214.2825 & 278.2687 & 426.6927 \\
		\bottomrule
	\end{tabular}
	\label{table_DA_Qs_E2}
\end{table}	
\begin{table}[h!]
\begin{center}
	\caption{Bootstrap CIs for the illustrative example based on 10,000 simulations.}
	\begin{tabular}{cccccc}
		\toprule
		CI                    & Method   & Shape & Scale & Location & p \\ \midrule
		\multirow{5}{*}{95\%} 
		& LPF & (0.7102,    2.3575) & ( 56.0280,  132.2849) & ( -14.1289,   4.4647) & 0.0046 \\
		& LSPF     &   (0.7462,	2.3137)
		&  (56.7866,	130.8569)      &     (-12.8434,	4.5948)  & 0.0020 \\
		& MMLE I   &  (0.9125, 2.8808) & (50.5208, 113.5426)  & (-10.9357, 7.4144) & 0.0000 \\ 
		& MMLE III &    (0.6568,  2.1126) & (57.0883,  137.9524) & (-8.8587,  6.7738)	& 0.0014  \\ \midrule
		\multirow{5}{*}{99\%} 
		& LPF  & (0.6400,    3.4396) & ( 48.6631, 149.7184) & (-22.8317,    8.7151) & 0.0046 \\
		& LSPF  &   (0.6765,	3.7231)  & (48.3591,	147.1999)   &   (-19.4118,	8.3977) & 0.0020 \\
		& MMLE I   & (0.8051, 3.7779) &   (44.0821, 129.2398) & (-14.1193, 12.5442) & 0.0000\\ 
		& MMLE III &        (0.5831, 3.1162) & (48.3295,  157.8311)       &  (-17.9056,   10.3762) & 0.0014  \\ \midrule
	\end{tabular}
\label{table_DA_CI_E28}
\end{center}
\end{table}
\begin{figure}[h!]
\centering
	\begin{adjustbox}{width=1\textwidth,height=0.55\textwidth}
		\includegraphics[width=\textwidth]{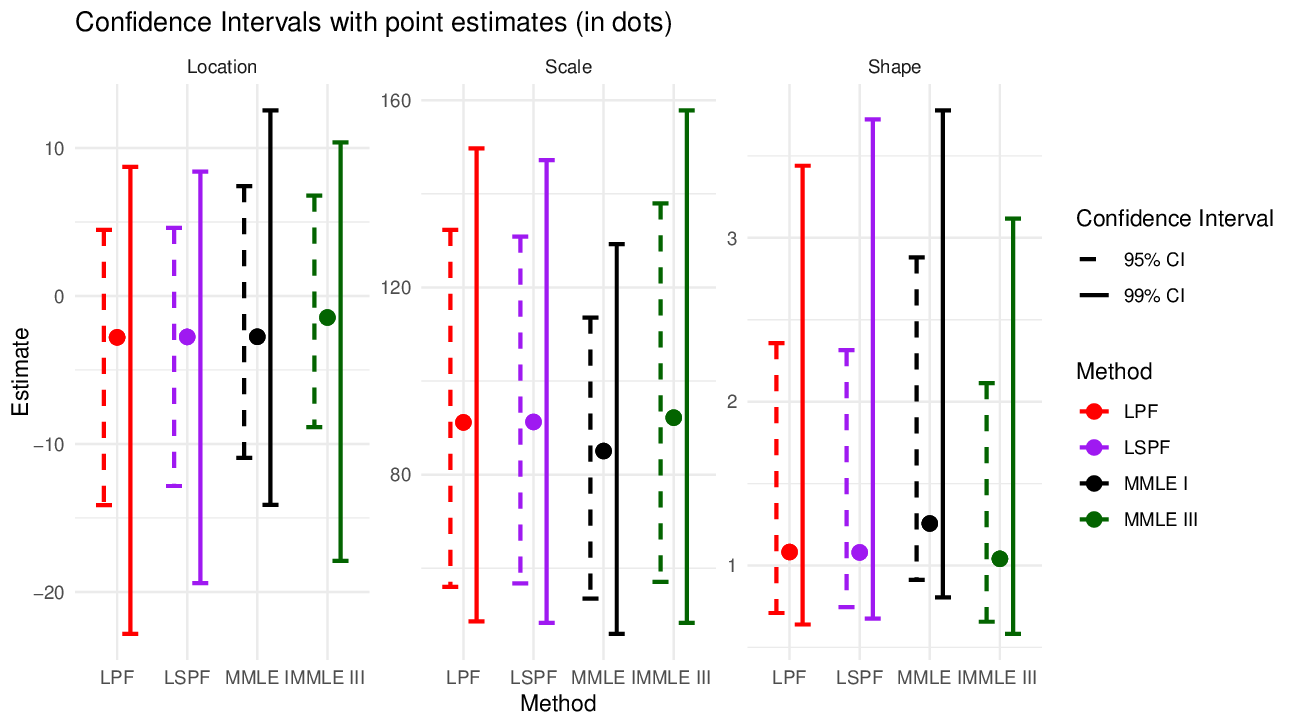}
	\end{adjustbox}
\caption{Graphical representation of the bootstrap CIs along with the point estimates for the illustrative example.}
\label{fig_9}
\end{figure} 

Subfigure \ref{subfig_61} shows a graphic that may be used to determine reasonable initial parameter values. It also shows that the likelihood function is unimodal, meaning that the likelihood is globally maximized by the derived estimates. Plots of the empirical CDF and the fitted CDF are shown in Subfigure \ref{subfig_62}. To find the bootstrap confidence intervals (CIs), we set $\beta_U = 12$. Based on this dataset, Table \ref{table_DA_CI_E28} and Figure \ref{fig_9} present $95\%$ and $99\%$ bootstrap CIs for each method where we found that the proportion of rejection is negligible for all the methods. For this data analysis, each method’s CI contains its corresponding point estimate. Overall, the results for this dataset agree with the simulation study: LPF and LSPF give stable parameter and quantile estimates. All methods provide comparable fits based on the CDF and goodness-of-fit tests. It is to be noted that, the location parameter $\gamma$ under this study belongs to the set of real numbers $R$, hence the estimation method and the algorithms are proposed accordingly. Although, the estimation of the location parameter can be constrained to $\gamma \ge 0$ if the application requires a physically interpretable location parameter. 
\section{Summary}
\label{section5}
In this paper, we consider an earlier-proposed estimation method for a three-parameter GE distribution known as the LPF method. We discussed some properties of the estimators, such as uniqueness and consistency. A Monte Carlo simulation study has been conducted to evaluate the performance of the LPF method comparative to other existing methods. In numerical simulations, we reported bias and RMSE for estimators of all three parameters and quantiles of the GE distribution. For $\beta > 1$, the LPF method performs better than all other methods for estimating scale parameters in terms of bias and RMSE. In addition to other methods such as MMLE I, MLE, etc., MMLE III has excellent and consistent performance when estimating shape parameter. When $\beta <1$, the LPF method is recommended for the estimation of the shape parameter when the sample size is small. When $n \geq 50$, MMLE III is advisable if $\beta$ is sufficiently small $(\beta \approx 0.5)$ and any method is acceptable if $0.5 < \beta \leq 1$ as they all have similar performances for all parameters. The LPF method is recommended for estimating quantiles, especially when $\beta > 1$.  Quantile estimation based on the LPF method appeared performing very good even for large values of the shape parameter. Based on the reported results of the proportion of rejected samples during the simulation study, it is recommended that LSPF and MMLE I be used for data analysis before any other methods. A real dataset from reliability engineering, has been used to illustrate the LPF method. Moreover, the proposed method performs very good among existing methods for the presented dataset. 

According to this study, the LPF method is superior to the LSPF method in a number of aspects and offers certain advantages: (1) It consumes less time to run simulations and has a lower computational complexity. The primary cause for this is the decrease in the number of integration; (2) One disadvantage of the LSPF approach over the LPF method is that it employs estimates from the previous step in order to generate estimates separately in steps. This might result in a considerable accumulation of biases in the steps. The LPF, on the other hand, creates estimates by reducing the number of steps in the estimating approach; (3) the LPF  method is based on $n-1$ observations, whereas the LSPF method makes use of just $n-2$ observations.
\section*{Supplementary Material}
The supplementary material contains a ZIP archive (\texttt{Rcodes.zip}) including all \texttt{R} scripts used for the Monte Carlo simulations and the illustrative data analyses. Detailed instructions and folder descriptions are provided in an accompanying README file.
\section*{Acknowledgments}
The authors thank the anonymous reviewers for their constructive feedback, which has substantially improved the quality of the manuscript. The authors further acknowledge the support of the High-Performance Computing (HPC) systems at the Indian Institute of Technology Kanpur and Newcastle University.
\section*{Conflict of Interest}
We hereby declare that the information provided here is accurate, and no apparent conflicts of interest that are relevant to the content of this article. 
\bibliographystyle{apalike}
\bibliography{references}

@article{mudholkar1995exponentiated,
	title={The exponentiated {W}eibull family: A reanalysis of the bus-motor-failure data},
	author={Mudholkar, Govind S and Srivastava, Deo Kumar and Freimer, Marshall},
	journal={Technometrics},
	volume={37},
	number={4},
	pages={436--445},
	year={1995},
	publisher={Taylor \& Francis}
}

@article{guptakundu1999,
	title={Theory \& methods: {G}eneralized exponential distributions},
	author={Gupta, Rameshwar D and Kundu, Debasis},
	journal={Australian \& New Zealand Journal of Statistics},
	volume={41},
	number={2},
	pages={173--188},
	year={1999},
	publisher={Wiley Online Library}
}

@article{gupta2001exponentiated,
	title={Exponentiated exponential family: an alternative to gamma and {W}eibull distributions},
	author={Gupta, Rameshwar D and Kundu, Debasis},
	journal={Biometrical Journal: Journal of Mathematical Methods in Biosciences},
	volume={43},
	number={1},
	pages={117--130},
	year={2001},
	publisher={Wiley Online Library}
}

@article{gupta2007generalized,
	title={Generalized exponential distribution: {E}xisting results and some recent developments},
	author={Gupta, Rameshwar D and Kundu, Debasis},
	journal={Journal of Statistical Planning and Inference},
	volume={137},
	number={11},
	pages={3537--3547},
	year={2007},
	publisher={Elsevier}
}

@article{raqab2008estimation,
	title={Estimation of {P(Y}$<${X)} for the three-parameter generalized exponential distribution},
	author={Raqab, Mohammad Z and Madi, Mohamed T and Kundu, Debasis},
	journal={Communications in Statistics—Theory and Methods},
	volume={37},
	number={18},
	pages={2854--2864},
	year={2008},
	publisher={Taylor \& Francis}
}

@article{nagatsuka2013consistenta,
	title={A consistent method of estimation for the three-parameter {W}eibull distribution},
	author={Nagatsuka, Hideki and Kamakura, Toshinari and Balakrishnan, N},
	journal={Computational Statistics \& Data Analysis},
	volume={58},
	pages={210--226},
	year={2013},
	publisher={Elsevier}
}

@article{nagatsuka2014consistent,
	title={A consistent method of estimation for the three-parameter gamma distribution},
	author={Nagatsuka, Hideki and Balakrishnan, N and Kamakura, Toshinari},
	journal={Communications in Statistics-Theory and Methods},
	volume={43},
	number={18},
	pages={3905--3926},
	year={2014},
	publisher={Taylor \& Francis}
}

@article{nagatsuka2012parameter,
	title={Parameter and quantile estimation for the three-parameter gamma distribution based on statistics invariant to unknown location},
	author={Nagatsuka, Hideki and Balakrishnan, N},
	journal={Journal of Statistical Planning and Inference},
	volume={142},
	number={7},
	pages={2087--2102},
	year={2012},
	publisher={Elsevier}
}

@article{cohen1982modified,
	title={Modified maximum likelihood and modified moment estimators for the three-parameter {W}eibull distribution},
	author={Cohen, Clifford A and Whitten, Betty},
	journal={Communications in Statistics-Theory and Methods},
	volume={11},
	number={23},
	pages={2631--2656},
	year={1982},
	publisher={Taylor \& Francis}
}

@article{cohen1980estimation,
	title={Estimation in the three-parameter lognormal distribution},
	author={Cohen, A Clifford and Whitten, Betty Jones},
	journal={Journal of the American Statistical Association},
	volume={75},
	number={370},
	pages={399--404},
	year={1980},
	publisher={Taylor \& Francis}
}

@article{cohen1984modified,
	title={Modified moment estimation for the three-parameter {W}eibull distribution},
	author={Cohen, A Clifford and Whitten, Betty Jones and Ding, Yihua},
	journal={Journal of Quality Technology},
	volume={16},
	number={3},
	pages={159--167},
	year={1984},
	publisher={Taylor \& Francis}
}

@article{pasari2014three,
	title={Three-parameter generalized exponential distribution in earthquake recurrence interval estimation},
	author={Pasari, Sumanta and Dikshit, Onkar},
	journal={Natural hazards},
	volume={73},
	number={2},
	pages={639--656},
	year={2014},
	publisher={Springer}
}

@article{hall2005bayesian,
	title={{B}ayesian likelihood methods for estimating the end point of a distribution},
	author={Hall, Peter and Wang, Julian Z},
	journal={Journal of the Royal Statistical Society: Series B (Statistical Methodology)},
	volume={67},
	number={5},
	pages={717--729},
	year={2005},
	publisher={Wiley Online Library}
}

@book{billingsley2008probability,
	title={Probability and measure},
	author={Billingsley, Patrick},
	year={1994},
	publisher={John Wiley \& Sons (3rd ed.)},
	address={New York}
}

@article{ghitany2013,
	title={On the existence and uniqueness of the {MLE}s of the parameters of a general class of exponentiated distributions},
	author={Ghitany, ME and Al-Jarallah, RA and Balakrishnan, N},
	journal={Statistics},
	volume={47},
	number={3},
	pages={605--612},
	year={2013},
	publisher={Taylor \& Francis}
}

@book{bain1987introduction,
	title={Introduction to probability and mathematical statistics},
	author={Bain, Lee J and Engelhardt, Max},
	year={1988},
	publisher={Duxbury Press},
	address={Boston}
}

@article{nagatsuka2015efficient,
	title={An efficient method of parameter and quantile estimation for the three-parameter {W}eibull distribution based on statistics invariant to unknown location parameter},
	author={Nagatsuka, Hideki and Balakrishnan, N},
	journal={Communications in Statistics-Simulation and Computation},
	volume={44},
	number={2},
	pages={295--318},
	year={2015},
	publisher={Taylor \& Francis}
}

@article{prajapat2021consistent,
  title={A consistent method of estimation for three-parameter generalized exponential distribution},
  author={Prajapat, Kiran and Mitra, Sharmishtha and Kundu, Debasis},
  journal={Communications in Statistics-Simulation and Computation},
  volume={52},
  number={6},
  pages={2471--2487},
  year={2021},
  publisher={Taylor \& Francis}
}

@article{basu2023three,
  title={On three-parameter generalized exponential distribution},
  author={Basu, Suparna and Kundu, Debasis},
  journal={Communications in Statistics-Simulation and Computation},
  volume={53},
  number={12},
  pages={6222--6236},
  year={2023},
  publisher={Taylor \& Francis}
}

@book{shao2003mathematical,
	title        = {Mathematical Statistics},
	author       = {Shao, Jun},
	edition      = {2nd},
	year         = {2003},
	publisher    = {Springer},
	address      = {New York},
	series       = {Springer Texts in Statistics},
	isbn         = {978-0-387-98774-2}
}

@article{gupta2001generalized,
	title={Generalized exponential distribution: different method of estimations},
	author={Gupta, Rameshwar D and Kundu, Debasis},
	journal={Journal of statistical computation and simulation},
	volume={69},
	number={4},
	pages={315--337},
	year={2001},
	publisher={Taylor \& Francis}
}

@article{smith1985maximum,
	title={Maximum likelihood estimation in a class of nonregular cases},
	author={Smith, Richard L},
	journal={Biometrika},
	volume={72},
	number={1},
	pages={67--90},
	year={1985},
	publisher={Oxford University Press}
}

@article{smith1987comparison,
	title={A comparison of maximum likelihood and Bayesian estimators for the three-parameter Weibull distribution},
	author={Smith, Richard L and Naylor, JC918854},
	journal={Journal of the Royal Statistical Society Series C: Applied Statistics},
	volume={36},
	number={3},
	pages={358--369},
	year={1987},
	publisher={Oxford University Press}
}

@article{hirose1997inference,
	title={Inference from grouped data in three-parameter Weibull models with applications to breakdown-voltage experiments},
	author={Hirose, Hideo and Lai, Tze Leung},
	journal={Technometrics},
	volume={39},
	number={2},
	pages={199--210},
	year={1997},
	publisher={Taylor \& Francis}
}
\section*{Ethics declarations}
Funding information - not applicable.
\appendix
\section{Proof of Theorem \ref{thm_2.1}}
\label{app_1}
Using the transformation of random variables, we find the joint PDF of the random variables $V_{(2)},V_{(3)}, \dots, V_{(n)}$ for $ \alpha > 0, ~\beta>0 $. Define $Z_{(i)} = {X_{(i)} - \gamma}, i = 1,2, \dots,n$. $Z_1, Z_2, \dots,Z_n$  are \textit{i.i.d.} random variables from the GE distribution $\text{GE}(\alpha,\beta,0)$ with the shape parameter $\beta$ and the scale parameter $\alpha$. Denote the PDF and the CDF of $\text{GE}(\alpha,\beta,0)$ by $g(.;\alpha,\beta)$ and $G(.;\alpha, \beta)$, respectively, for convenience. Now, $V_{(i)} $ in equation $(\refeq{eq3})$ can be rewritten in terms of $Z_{(i)}$'s as follows:
\begin{align*}
& V_{(i)} = {Z_{(i)} - Z_{(1)}}, ~ i=2, 3, \dots,n  \implies Z_{(i)} = Z_{(1)} + V_{(i)},~ i=2, 3, \dots,n.&.
\end{align*}   
Let us assume that $ U = Z_{(1)}$. Therefore $ Z_{(i)} = U + V_{(i)},~ i=2, 3, \dots,n.$
It can be shown that the Jacobian of the transformation $ J = \frac{\partial (Z_{(1)}, Z_{(2)}, \dots, Z_{(n-1)},Z_{(n )}) }{ \partial (U,V_{(2)}, \dots, V_{(n-1)},V_{(n) })} = 1 $. If we use a notation $f_{Y_1,Y_2,\dots,Y_p}(.)$ to denote the joint PDF of $Y_1,Y_2,\dots,Y_p$, we have
\begin{align*}
f_{U,V_{(2)}, \dots, V_{(n)}}(u, v_2, \dots, v_n;\alpha,\beta) 	&  = |J| f_{Z_{(1)},Z_{(2)},\dots,Z_{(n-1)},Z_{(n)}}(u, u + v_2,\dots,u + v_n;\alpha,\beta ) & \\
&  = n! g(u;\alpha, \beta) \Big\{ \prod_{i = 2}^{n} g(u+v_i;\alpha, \beta) \Big\},  &
\end{align*}
$0 < v_2 < \dots < v_{n} < \infty$ and $0 < u < \infty$. 
Therefore
\begin{align}
\label{eq5}
& f_{V_{(2)}, \dots, V_{(n)}}(v_2, \dots, v_{n};\alpha,\beta ) &  \nonumber\\ 
& \qquad = \int_{0}^{\infty} n! g(u;\alpha, \beta) \Big\{ \prod_{i = 2}^{n} g(u+v_i;\alpha, \beta) \Big\} \ du  & \\
& \qquad = n! \Big(\frac{\beta}{\alpha}\Big)^n \int_{0}^{\infty}  e^{ -\frac{1}{\alpha}\sum_{i=1}^{n} (u+v_i) } \prod_{i = 1}^{n} \big( 1 - e^{-\frac{u+v_i}{\alpha}} \big)^{\beta-1} \ du, & \nonumber
\end{align} with $v_1=0$ and hence, the likelihood function of $(\alpha, \beta)$ given $v_2, v_3,\dots, v_{n} $ is given by
\begin{align*}
\ell_v(\alpha, \beta|v_2, \dots, v_{n}) & = f_{V_{(2)}, \dots, V_{(n)}}(v_2, \dots, v_{n};\alpha,\beta ),~ \alpha >0, ~ \beta>0 &
\end{align*}
which proves the theorem.
\section{Boundedness and Differentiablity of the Likelihood Function \boldmath{$\ell_v(\alpha,\beta)$}}
\label{app_2}
\subsection{Boundedness}
By equation $\eqref{eq5}$, we have 
\begin{align}
\label{eq6}
\ell_v(\alpha,\beta) & = n! \int_{0}^{\infty} g(u;\alpha,\beta)  \prod_{i = 2}^{n} g(u+v_i;\alpha,\beta) \ du,~ \alpha > 0,~\beta>0. &
\end{align}
It can be shown for $\alpha > 0, ~\beta > 0, \ 0 < v_2 < \dots < v_{n} < \infty$ and $0 < u  < \infty $ that $ \frac{(n-1)! \prod_{i=2}^{n} g( u+v_i;\alpha,\beta)}{(1 - G(u;\alpha, \beta))^{n-1}} $ is bounded , \textit{i.e.}, $ \exists \ $ an $M > 0$ such that $ (n-1)! \prod_{i=2}^{n}  g( u+ v_i;\alpha,\beta) < M (1 - G(u;\alpha, \beta))^{n-1} ~ \forall ~ \alpha>0, ~\beta > 0, \ 0 < v_2 < \dots < v_{n} < \infty$ and $ 0 < u < \infty $. It implies:   
\begin{align*}
\int_{0}^{\infty} n!  g(u;\alpha,\beta) \prod_{i=2}^{n}  g( u+ v_i;\alpha,\beta)  \ du < M \int_{0}^{\infty} n g(u;\alpha,\beta) (1 - G(u;\alpha,\beta))^{n-1} \ du = M,
\end{align*}  
\subsection{Differentiablity}
First we show differentiablity of the likelihood function $\ell_v(\alpha,\beta)$ in equation \eqref{eq4} with respect to $\beta$. In order to do so, let us rewrite the likelihood function as follows:
\begin{align}
\label{eqq3}
\ell_v(\alpha, \beta) = &  n!  \int_{0}^{\infty}  e^{ h_v(\alpha, \beta;u)}  \ du, ~ \alpha > 0,~ \beta > 0, &
\end{align}
where
\begin{align}
\label{eqq4}	
h_v(\alpha, \beta;u) =  n \ln(\beta) - n \ln(\alpha) - \frac{1}{\alpha} \sum_{i=1}^{n} (u+v_i) + (\beta - 1) \sum_{i=1}^{n} \ln(1 - e^{-\frac{u+v_i}{\alpha}}). 
\end{align}
To show that the likelihood function is differentiable, we need to show that the partial derivative, with respect to $\beta$, can be taken inside the integral in the equation $\eqref{eq4}$. Given $0 < v_2 < \dots < v_{n} < \infty, ~ v_{1} = 0 ~ \text{and} ~ \alpha>0$, we show:
\begin{center}
\begin{enumerate}\vspace{-1em}
	\setlength{\itemsep}{0em}
	\item $\frac{\partial}{\partial \beta} e^{h_v(\alpha, \beta;u)} $ exists,
	\item $ \left| \frac{\partial}{\partial \beta} e^{h_v(\alpha,\beta;u)} \right| < h_2{(u)} $ for some positive function $h_2$ and $\forall ~ u \in (0,\infty) $ such that $\int_{0}^{\infty}  h_2(u) \ du < \infty$, \textit{i.e.} $\frac{\partial}{\partial \beta} e^{h_v(\alpha,\beta;u)} $ is an integrable function with respect to the variable $u$.
\end{enumerate}
\end{center}
Since exponential, logarithmic and polynomials are well-known smooth functions, therefore $e^{h_v(\alpha,\beta;u)}$ is differentiable with respect to $\beta$ and it is given by 
\begin{align}
\label{eqq5}
\frac{\partial}{\partial \beta} e^{h_v(\alpha,\beta;u)} = e^{h_v(\alpha,\beta;u)}   \frac{\partial}{\partial \beta} h_v(\alpha,\beta;u)   =   \Big( \frac{n}{\beta} + \sum_{i = 1}^{n} \ln\big(1 - e^{-\frac{u+v_i}{\alpha}}\big) \Big) e^{h_v(\alpha,\beta;u) }.
\end{align} Equation \eqref{eqq5} can be simplified as follows:
\begin{align}
\label{eqq6}
\frac{\partial}{\partial \beta} e^{h_v(\alpha,\beta;u)}  
= \frac{n}{\beta} e^{h_v(\alpha,\beta;u) } + \Big(\frac{\beta}{\alpha}\Big)^n \Big( \sum_{i = 1}^{n} \ln\big(1 - e^{-\frac{u+v_i}{\alpha}}\big) \Big)  \prod_{i = 1}^{n} \Big\{ e^{-\frac{u+v_i}{\alpha}}  \big( 1 - e^{-\frac{u+v_i}{\alpha}} \big)^{\beta-1} \Big\}.
\end{align}
It is clear that first term of the equation \eqref{eqq6} is integrable with respect to the variable $u$. Now, we show integrability of the second term and in order to do so, let us assume that 
\begin{align*}
\mathfrak{H}_1(u) = \Big( \sum_{i = 1}^{n} \ln\big(1 - e^{-\frac{u+v_i}{\alpha}}\big) \Big)  \prod_{i = 1}^{n} \Big\{ e^{-\frac{u+v_i}{\alpha}}  \big( 1 - e^{-\frac{u+v_i}{\alpha}} \big)^{\beta-1} \Big\}.
\end{align*} 
For every $\alpha > 0, ~\beta > 0, ~ 0 < v_2 < \dots < v_{n} < \infty, ~ v_1 = 0 $ and $0 < u < \infty $, $\mathfrak{H}_1(u)$ tends to zero as $u \to \infty$. Also, $\mathfrak{H}_1(u) < h_3(u)$, where $ h_3(u) = C_1 e^{-\frac{u}{\alpha}}  \big( 1 - e^{-\frac{u}{\alpha}} \big)^{\beta-1} \big( C_2 + \ln\big(1 - e^{-\frac{u}{\alpha}}\big) \big)$ with some finite constants $C_1$ and $C_2$. Now it is enough to show that $h_3(u)$ is integrable. Therefore, consider the quantity
\begin{align*}
\int_0^\infty h_3(u) \ du & = \int_0^\infty C_1 e^{-\frac{u}{\alpha}}  \big( 1 - e^{-\frac{u}{\alpha}} \big)^{\beta-1} \big( C_2 + \ln\big(1 - e^{-\frac{u}{\alpha}}\big) \big) \ du & \\
= & ~ C_1 C_2 \int_0^\infty  e^{-\frac{u}{\alpha}}  \big( 1 - e^{-\frac{u}{\alpha}} \big)^{\beta-1} \ du   +  C_1 \int_0^\infty e^{-\frac{u}{\alpha}}  \big( 1 - e^{-\frac{u}{\alpha}} \big)^{\beta-1} \ln\big(1 - e^{-\frac{u}{\alpha}}\big)  \ du & \\
= & ~ C_1 C_2  \alpha \int_0^1  u_1^{\beta-1} \ du_1   +  C_1 \alpha \int_0^1 u_1^{\beta-1} \ln(u_1) \ du_1 & 
\end{align*} 
Now, when $\beta\leq1$, we see that the integrand of the second term has singularity point at the boundary of the integration domain, but the integrand is integrable because
\begin{align*}
\lim_{\epsilon \to 0^+}	\int_\epsilon^1  u^{\beta-1} \ln(u) \ du & =  \lim_{\epsilon \to 0^+}	 \int_\epsilon^1  \ln(u) \ du  = -1.& 
\end{align*} and hence, after simplification	
\begin{align*}
\int_0^\infty h_3(u) \ du =  & ~ C_1 C_2  \frac{\alpha}{\beta} -  C_1 \frac{\alpha}{\beta^2} < \infty. & 
\end{align*}
Hence, part (\textit{ii}) of Theorem 16.8 of \cite{billingsley2008probability} implies that the likelihood function $\ell_v(\alpha, \beta)$ is differentiable with respect to $\beta$ and its derivative is given by
\begin{align}
\label{eqq7}
\frac{\partial }{\partial \beta} &\ell_v(\alpha, \beta)  =  n! \int_{0}^{\infty}  \frac{\partial}{\partial \beta} e^{h_v(\alpha,\beta;u)} \ du & \nonumber \\
& = n! \Big(\frac{\beta}{\alpha}\Big)^n \int_{0}^{\infty}  \Big( \frac{n}{\beta} + \sum_{i = 1}^{n} \ln\big(1 - e^{-\frac{u+v_i}{\alpha}}\big) \Big)  e^{ -\frac{1}{\alpha}\sum_{i=1}^{n} (u+v_i) } \prod_{i = 1}^{n} \big( 1 - e^{-\frac{u+v_i}{\alpha}} \big)^{\beta-1}. &
\end{align} 
Now, in a similar manner, we step forward to show that $\ell_v(\alpha, \beta)$ is differentiable with respect to $\alpha$ by demonstrating that $\frac{\partial}{\partial \alpha} e^{h_v(\alpha, \beta;u)} $ exists and it is integrable with respect to the variable $u$ provided $0 < v_2 < \dots < v_{n} < \infty, ~ v_{1} = 0 ~ \text{and} ~ \beta>0$.
The first point is obvious to satisfy as it is a function of exponential, logarithmic and polynomial functions which are well-known smooth functions and the derivative is given by 
\begin{align}
\label{eqq8}
\frac{\partial}{\partial \alpha} & e^{h_v(\alpha,\beta;u)}  = e^{h_v(\alpha,\beta;u)} \frac{\partial}{\partial \alpha} h_v(\alpha,\beta;u)  &  \nonumber \\
& =  \Big( - \frac{n}{\alpha} +  \frac{1}{\alpha} \sum_{i = 1}^{n} \frac{u+v_i}{\alpha} - \frac{(\beta-1)}{\alpha} \sum_{i = 1}^{n} \frac{u+v_i}{\alpha}  \Big( \frac{e^{-\frac{u+v_i}{\alpha} }}{1 - e^{-\frac{u+v_i}{\alpha} }} \Big)  \Big) e^{h_v(\alpha,\beta;u) }. &
\end{align} 
Equation \eqref{eqq8} is simplified as follows:
\begin{align}
\label{eqq9}
\frac{\partial}{\partial \alpha} e^{h_v(\alpha,\beta;u)}  
= & - \frac{n}{\alpha} e^{h_v(\alpha,\beta;u) } + \frac{1}{\alpha} \Big(  \sum_{i = 1}^{n} \frac{u+v_i}{\alpha}  \Big)  e^{h_v(\alpha,\beta;u) }  -  \frac{(\beta-1)}{\alpha}  \Big(  \sum_{i = 1}^{n} \frac{u+v_i}{\alpha}  \Big( \frac{e^{-\frac{u+v_i}{\alpha} }}{1 - e^{-\frac{u+v_i}{\alpha} }} \Big) \Big)  e^{h_v(\alpha,\beta;u) } & \nonumber \\
= & - \frac{n}{\alpha} e^{h_v(\alpha,\beta;u) } + \frac{\beta}{\alpha} \Big(  \sum_{i = 1}^{n} \frac{u+v_i}{\alpha}  \Big)  e^{h_v(\alpha,\beta;u) }  -  \frac{(\beta-1)}{\alpha}  \Big(  \sum_{i = 1}^{n} \frac{u+v_i}{\alpha}  \Big(  \frac{1}{1 - e^{-\frac{u+v_i}{\alpha} }} \Big) \Big)  e^{h_v(\alpha,\beta;u) } & \nonumber \\
= & - \frac{n}{\alpha} e^{h_v(\alpha,\beta;u) } + \frac{\beta}{\alpha^2} \Big(  \sum_{i = 1}^{n} v_i  \Big)  e^{h_v(\alpha,\beta;u) } + \frac{n \beta}{\alpha^2} u  e^{h_v(\alpha,\beta;u) } & \nonumber \\
& -  \frac{(\beta-1)}{\alpha^2}  \Big(  \sum_{i = 1}^{n}  \Big(  \frac{1}{1 - e^{-\frac{u+v_i}{\alpha} }} \Big) \Big)  u e^{h_v(\alpha,\beta;u) }  -  \frac{(\beta-1)}{\alpha^2}  \Big(  \sum_{i = 2}^{n} v_i \Big(  \frac{1}{1 - e^{-\frac{u+v_i}{\alpha} }} \Big) \Big)  e^{h_v(\alpha,\beta;u) }. & 
\end{align}
It is clear that first two terms of equation \eqref{eqq9} are integrable as $e^{h_v(\alpha,\beta;u) }$ is integrable with respect to $u$. Now, we show integrablity of the third term and the proofs for the remaining terms goes along the same line and hence proofs are omitted. Assume that 
\begin{align*}
\mathfrak{H}_2(u) = u e^{h_v(\alpha,\beta;u) } = \Big(\frac{\beta}{\alpha}\Big)^n  u \prod_{i = 1}^{n} \Big\{ e^{-\frac{u+v_i}{\alpha}}  \big( 1 - e^{-\frac{u+v_i}{\alpha}} \big)^{\beta-1} \Big\}.
\end{align*} 
For every $\alpha > 0, ~\beta > 0, ~ 0 < v_2 < \dots < v_{n} < \infty, ~ v_1 = 0 $ and $0 < u < \infty $, $\mathfrak{H}_2(u)$ tends to zero as $u \to \infty$. Also, $\mathfrak{H}_2(u) < h_4(u)$, where $ h_4(u) = C_3 u e^{-\frac{u}{\alpha}}  \big( 1 - e^{-\frac{u}{\alpha}} \big)^{\beta-1} $ with some finite constant $C_3$. Now it is enough to show that $h_4(u)$ is integrable. Therefore, we consider the quantity
\begin{align*}
\int_0^\infty h_4(u) \ du & = \int_0^\infty C_3 u e^{-\frac{u}{\alpha}}  \big( 1 - e^{-\frac{u}{\alpha}} \big)^{\beta-1} \ du & \\
= & - C_3  \alpha^2 \int_0^1  u_1^{\beta-1} \ln(1-u_1) \ du_1 &\\
= & - C_3  \alpha^2 \int_0^1  (1-y)^{\beta-1} \ln(y) \ dy &\\
= &  - C_3  \alpha^2 \Big( \int_0^{1/2}  (1-y)^{\beta-1} \ln(y) \ dy + \int_{1/2}^1  (1-y)^{\beta-1} \ln(y) \ dy \Big). &. 
\end{align*}
We denote the first and the second terms of the above equation by $I_1$ and $I_2$, respectively. Here
\begin{align*}
I_1  = \int_0^{1/2}  (1-y)^{\beta-1} \ln(y) \ dy <  \int_0^{1/2} \ln(y) \ dy = \frac{1}{2} ( \ln(1/2) - 1) < \infty
\end{align*} and
\begin{align*}
I_2 & = \int_{1/2}^1  (1-y)^{\beta-1} \ln(y) \ dy =  \frac{1}{2^\beta} \ln(1/2) + \int_{1/2}^1 \frac{ (1-y)^{\beta} }{y} \ dy  & \\
& < \frac{1}{2^\beta} \ln(1/2) + 2 \int_{1/2}^1 (1-y)^{\beta} \ dy = \frac{1}{2^\beta} (\ln(1/2) - \frac{1}{\beta}) < \infty. & 
\end{align*}
Therefore, $\int_0^\infty h_4(u) \ du < \infty$. Hence, it implies that the likelihood function $\ell_v(\alpha, \beta)$ is differentiable with respect to $\alpha$ and its derivative is given by
\begin{align}
\label{eqq10}
\frac{\partial }{\partial \alpha} \ell_v(\alpha, \beta)  & =  n! \int_{0}^{\infty}  \frac{\partial}{\partial \alpha} e^{h_v(\alpha,\beta;u)} \ du & \nonumber \\
& = n! \Big(\frac{\beta}{\alpha}\Big)^n \int_{0}^{\infty}  \Big( - \frac{n}{\alpha} +  \frac{1}{\alpha} \sum\limits_{i = 1}^{n} \frac{u+v_i}{\alpha} - \frac{(\beta-1)}{\alpha} \sum\limits_{i = 1}^{n} \frac{u+v_i}{\alpha}  \Big( \frac{e^{-\frac{u+v_i}{\alpha} }}{1 - e^{-\frac{u+v_i}{\alpha} }} \Big) \Big) &  \nonumber \\
& \qquad \times  e^{ -\frac{1}{\alpha}\sum_{i=1}^{n} (u+v_i) } \prod_{i = 1}^{n} \big( 1 - e^{-\frac{u+v_i}{\alpha}} \big)^{\beta-1}.& 
\end{align} 
\section{Tables}

\begin{table}[h]
\caption{Biases and RMSEs of the estimators while varying sample size $n$ based on $10,000$ simulations.}
\scriptsize
\begin{center}
	\begin{adjustbox}{width=0.79\textwidth, height = 0.59\textwidth}

	\end{adjustbox}
\end{center}
\label{table6}
\end{table}
\end{document}